\documentclass{ptephy_v2}
\usepackage{booktabs}
\usepackage{placeins}
\usepackage{hyperref}
\hypersetup{colorlinks=true,linkcolor=blue,citecolor=blue,urlcolor=blue}
\graphicspath{{./}}

\newtheorem{theorem}{Theorem}[section]

\newcommand{\End}{\mathrm{End}}
\newcommand{\tr}{\mathrm{tr}}
\newcommand{\HS}{\mathrm{HS}}
\newcommand{\Uhat}{\hat{U}}
\newcommand{\Util}{\tilde{U}}
\newcommand{\id}{\mathbf{1}}
\newcommand{\CP}{\mathbb{CP}}
\newcommand{\VolCP}{\mathrm{Vol}}
\makeatletter
\g@addto@macro\ps@headings{%
  \def\@oddfoot{\vbox{\vskip16pt\hbox to \textwidth{}\hbox to \textwidth{\fontsize{10}{12}\selectfont\hfill\thepage\hfill}}}%
  \def\@evenfoot{\vbox{\vskip16pt\hbox to \textwidth{}\hbox to \textwidth{\fontsize{10}{12}\selectfont\hfill\thepage\hfill}}}%
  \def\@oddhead{}\def\@evenhead{}}
\renewcommand{\titlepageheadline}{}
\renewcommand{\titlepagefootline}{}
\makeatother

\begin{document}

\title{Construction and geometric properties of the fuzzy $\mathbb{CP}^3$ onion}

\author{Kohei Uno*}
\affil{Department of Physics, Hiroshima University, 1-3-1 Kagamiyama, Higashi-Hiroshima, Hiroshima 739-8526, Japan \email{koheiuno0508@gmail.com}}

\author[1]{Tomohiro Inagaki}

\begin{abstract}%
We consider the matrix model of a seven-dimensional quantum space built from
concentric fuzzy $\CP^3$'s of increasing matrix size, and show that its continuum
limit is the metric cone $C(\CP^3)$ over $\CP^3$, not flat $\mathbb{R}^7$. Every
ingredient is fixed before any comparison, so this geometry is a
prediction rather than an input. The two geometries agree in the $s$ wave ($p=0$) but
separate in the higher angular sectors (harmonics on $\CP^3$ of degree
$p\geq1$): in each sector tested, the spectrum
of the free Laplacian converges at second order to that of the cone, and at $20480$
layers the cone is matched to $1\times10^{-9}$ for $p\leq8$, whereas flat space
misses by $8$--$18\%$. The free propagator converges to that of
the cone at the same order, and heat diffusion reproduces the $r^6$ measure and the
Bessel spectrum of the cone. The maps between neighbouring layers are obtained in closed form from the polar decomposition of the
second-quantized Schwinger-boson lift, using a known operator identity, and coincide
with the Kov\'a\v{c}ik--Tekel coefficient-copying prescription. For $\CP^1$ the
construction reduces, away from the outer boundary, to the Kov\'a\v{c}ik--Tekel
fuzzy onion. Modes along the twistor fibre are exactly degenerate with the
harmonics on $S^4$.
\end{abstract}

\subjectindex{B80, B82, B83}

\maketitle

\section{Introduction}
\label{sec:intro}

Matrix models provide some of the most concrete proposals for a nonperturbative
formulation of string theory and quantum gravity, most prominently the BFSS
model~\cite{BFSS} and the IKKT model~\cite{IKKT}. In these models no background
spacetime is assumed: macroscopic geometry emerges from configurations of large matrices, and
its fluctuations are governed by the matrix dynamics~\cite{Steinacker2010,KNT}.
The simplest emergent geometries are fuzzy spaces~\cite{Hoppe,Madore}, in which a
compact manifold is replaced by a finite-dimensional matrix algebra with an intrinsic
minimal length. Fuzzy spheres and fuzzy complex projective spaces arise as
classical solutions of matrix models with cubic terms~\cite{Myers,IKTW,ABNN,Kitazawa},
and field theories on them are finite by construction. The standard examples, the
fuzzy sphere and the fuzzy complex projective spaces~\cite{BDLMO,ABIY}, are
quantized coadjoint orbits: they inherit a symplectic structure from the underlying
Lie algebra and are therefore necessarily even-dimensional. Odd-dimensional fuzzy
spheres can be defined, but only through considerably more elaborate
constructions~\cite{GR,Ramgoolam,Sako}. A space with a genuine radial direction can
instead be assembled from fuzzy spheres of increasing size. This is the structure
of the noncommutative space $\mathbb{R}^3_\lambda$~\cite{HLS,VW}, which also follows
from the Moyal space $\mathbb{R}^4$ by reduction along the Jordan--Schwinger
map~\cite{GLMV}, and in the
rotationally invariant realization of G\'alikov\'a and Pre\v{s}najder on the Fock space of
two Schwinger bosons~\cite{GP,GKP} the Laplacian couples neighbouring spheres; its
phenomenology has been studied in~\cite{KDR}. Foliations of $\mathbb{R}^3$ by fuzzy
spheres and of $\mathbb{R}^4$ by fuzzy three-spheres also arise from quantized
Nambu--Poisson structures~\cite{DSS}, and a cone, the conifold, has been quantized
on the Fock space of four oscillators~\cite{AV}.

A matrix-model formulation of this idea was proposed by Kov\'a\v{c}ik
and Tekel~\cite{KTcorfu,KT}. Their fuzzy onion stacks concentric fuzzy spheres of
increasing matrix size and reinterprets the matrix size itself as a radial
coordinate. Fields become block-diagonal matrices; the angular Laplacian acts
within each block, while the radial derivative becomes a finite difference between
neighbouring blocks, implemented by maps that copy expansion coefficients from one
layer to the next. The result is a three-dimensional quantum space, with two
angular directions and one radial direction, a regular centre and a boundary, on
which scalar field theory can be studied with the tools of matrix
models~\cite{KTH,HKT}.

Extending this idea to higher dimensions raises a basic question about the
continuum limit. A discrete stack of matrix layers may converge to flat space or to
a curved geometry such as the metric cone over the layer. The metric cone $C(X)$ over a manifold $X$ is
$\mathbb{R}_{>0}\times X$ with the metric $dr^2+r^2g_X$, whose section at radius $r$
is $X$ scaled by $r$~\cite{Cheeger}; $X$ is called the link of the cone. If $X$ is the round unit sphere $S^n$, $C(X)$
is flat $\mathbb{R}^{n+1}$ (minus the origin); otherwise the tip $r=0$ is singular.
For the fuzzy sphere the two alternatives therefore coincide, $C(S^2)=\mathbb{R}^3$,
and the original onion cannot distinguish between them.
For $\mathbb{CP}^n$ with $n\geq2$ this degeneracy is lifted, and which geometry the
matrix model selects becomes a meaningful question. We consider $\mathbb{CP}^3$. It
is the twistor space of the four-sphere~\cite{Penrose,AHS} and underlies both the
construction of fuzzy four-spheres and the higher-spin structure of covariant
quantum spaces in matrix
models~\cite{CLT,MedinaOConnor,MHOD,SperlingSteinacker,Steinacker}; it also appears
as the internal space in the gravity dual of ABJM theory~\cite{ABJM}. Whether a
stack of $\mathbb{CP}^3$ layers converges to flat $\mathbb{R}^7$ or to
$C(\mathbb{CP}^3)$ has not been determined.

In this paper we construct the fuzzy $\mathbb{CP}^3$ onion, a seven-dimensional
quantum space built from concentric fuzzy $\mathbb{CP}^3$'s, and use it to answer
this question. Our main result is that its continuum limit is the metric cone
$C(\mathbb{CP}^3)$, not flat $\mathbb{R}^7$. Because every ingredient is fixed in
advance, this is a prediction of the construction, with no fitted parameters. In each fixed angular
sector (harmonics on $\CP^3$ of degree $p$) the spectrum converges at second order to that of the cone; the two
candidates differ in every angular sector beyond the $s$ wave ($p=0$). At $20480$ layers the
cone is matched to $1\times10^{-9}$ in the sectors $p\leq8$, while flat space misses
by $8$--$18\%$ for $1\leq p\leq8$. The free propagator of the scalar field converges
to that of the cone as well, sector by sector and, with the fields represented by
their symmetric symbols, in position space, and differs from the flat one by
$21$--$37\%$ in the lowest sectors beyond the $s$ wave. Heat diffusion expresses the same
geometry in thermodynamic terms: the equilibrium temperature reproduces the $r^6$ growth of
the measure, and the relaxation times of the angular sectors the Bessel zeros of the
cone. In its continuum limit the construction thus differs both from the
three-dimensional onion~\cite{KT}, whose limit is flat, and from the fuzzy
$S^7$~\cite{Sako}, which is seven-dimensional but compact and is constructed by a
different method.

To reach this conclusion the onion must first be constructed on $\mathbb{CP}^3$.
The construction of~\cite{KT} rests on features specific to the two-sphere. The
copy prescription presupposes that a given harmonic on one layer and its
counterpart on the next are the same mode. This premise is transparent for
spherical harmonics, whose labels are the same on every layer, but it has no
evident counterpart on a general coadjoint orbit: the central obstacle is the
absence of a canonical map between layers of different size. We
resolve it without choosing a basis at all: realizing each layer on the Fock space
of four Schwinger bosons, we obtain the interlayer maps in closed form from the polar
decomposition of the second-quantized lift, which is the embedding of fuzzy $\CP^N$
harmonics used by Dolan, Huet, Murray and O'Connor~\cite{DHMO} and whose polar
factor follows from their operator identity. The Kov\'a\v{c}ik--Tekel prescription,
carried out in an explicitly constructed tensor basis, reproduces these maps
identically; this fixes their normalization once ``the same mode on two layers'' is
understood, as in~\cite{KT}, as equality of expansion coefficients. The radial
Laplacian then follows from the integration rule of fuzzy geometry, with the radial
factor of each bond evaluated at its geometric midpoint, and away from the outer
boundary it reduces exactly to that of~\cite{KT} for the two-sphere. The completely
positive normalization obtained by dividing the same lift by the layer index
reproduces instead, for the two-sphere, the Laplacian of~\cite{GP}, so that both
three-dimensional constructions are special cases of the present one. With this
normalization the limit is the same cone, but it is reached only at first order. The contribution of this work is therefore not
the operator identity, which is known~\cite{DHMO}, but its identification with
coefficient copying, the resulting free radial operator, its convergence to the cone
and the comparison of the propagators.

Modes along the twistor fibre are degenerate with the harmonics on $S^4$ and, seen
from $S^4$, behave as massless higher-spin fields. The $Spin(5)$-invariant
squashings~\cite{MHOD} that make the fibre visible and can be obtained by rescaling
the $\mathfrak{su}(4)$ generators cannot bring the fibre-to-base radius ratio below
$1/\sqrt2$, and they open no five-dimensional energy window. The model is constructed
rather than derived: the radial kinetic term is not obtained from a parent matrix
model, although concentric configurations of fuzzy spheres have recently been found
to be long-lived in a three-matrix model~\cite{Hrmo2026}, and the continuum limit is
established numerically, mode by mode.

The paper is organized as follows. Section~\ref{sec:cp3} introduces fuzzy
$\mathbb{CP}^3$ through its Schwinger-boson realization, including the
multiplicity-free harmonic decomposition and the trace. Section~\ref{sec:onion}
constructs the fuzzy $\mathbb{CP}^3$ onion, including the layer radii, the
interlayer maps, their uniqueness and the radial Laplacian.
Section~\ref{sec:continuum} identifies the continuum limit of the spectrum and of the
free propagator,
Section~\ref{sec:heat} studies heat diffusion, and Section~\ref{sec:kk} examines
the modes of the twistor fibre and its squashing. Section~\ref{sec:conclusion}
summarizes the results and outlines future directions. The effect of deforming the Up map
(the map from layer $j$ to layer $j+1$) is examined in Appendix~\ref{app:centrifugal}, and the linear program for the squashing
is described in Appendix~\ref{app:lp}.

\section{Fuzzy \texorpdfstring{$\mathbb{CP}^3$}{CP3}}
\label{sec:cp3}

\label{sec:S2}

The fuzzy sphere~\cite{Madore} truncates the spherical-harmonic expansion at
$l\leq N-1$ and replaces the pointwise product by the matrix product; the
$\sum_{l<N}(2l+1)=N^2$ modes match the degrees of freedom of an $N\times N$
matrix. The matrix spherical harmonics $Y^{(N)}_{lm}$ satisfy
\begin{equation}\label{eq:S2harm}
[L_i,[L_i,Y^{(N)}_{lm}]]=l(l+1)\,Y^{(N)}_{lm}
\end{equation}
with $L_i$ the generators of the spin-$\frac{N-1}{2}$ representation~\cite{KT}.

\subsection{Schwinger-boson realization}
\label{sec:bosons}
\label{sec:boson}

The isometry group of $\CP^3$ is $SU(4)$. The fuzzy $\CP^3$ at level
$k=0,1,2,\dots$ is the matrix algebra $\End(V_k)$ of linear operators on the totally symmetric
representation $V_k=\mathrm{Sym}^k(\mathbb{C}^4)$~\cite{BDLMO}. To work with explicit
matrices we realize $V_k$ with four species of creation and annihilation operators,
\begin{equation}\label{eq:ccr}
[a_i,a_j^\dagger]=\delta_{ij},\qquad i,j=1,\dots,4.
\end{equation}
The eigenspace of the number operator $\hat N=\sum_ia_i^\dagger a_i$ with eigenvalue
$k$ (the $k$-particle subspace) is precisely $V_k$. The onion labels its layers by
$j=k+1$ (layer $j$ is the fuzzy $\CP^3$ at level $k=j-1$); in terms of this index the
matrix size is
\begin{equation}\label{eq:Nj}
N_j=\dim V_{j-1}=\binom{j+2}{3}=\frac{j(j+1)(j+2)}{6}=1,\,4,\,10,\,20,\,35,\,\dots .
\end{equation}
The $SU(4)$ generators
are
\begin{equation}\label{eq:gens}
E_{ij}=a_i^\dagger a_j,\qquad \rho(T^A)=\sum_{i,j}(T^A)_{ij}\,E_{ij},
\end{equation}
where $T^A$ ($A=1,\dots,15$) are the $\mathfrak{su}(4)$ generators in the
fundamental representation, normalized as
$\tr(T^AT^B)=\frac12\delta^{AB}$.\footnote{The $T^A$ are $4\times4$ matrices,
whereas $\rho(T^A)$ is their $N_j\times N_j$ image on layer $j$; the two coincide
only for $j=2$. The normalization is not preserved by $\rho$:
$\tr_{V_k}\bigl(\rho(T^A)\rho(T^B)\bigr)=I_2(k)\,\delta^{AB}$ with
$I_2(k)=\frac12\binom{k+4}{5}$. The angular Laplacian and
$\hat C_2$ are built from $\rho(T^A)$.} With the noncommutativity length scale $\lambda$ and coordinate operators
$X^A=2\lambda\,\rho(T^A)$ one finds
\begin{equation}\label{eq:XA2}
\sum_A(X^A)^2=\frac32\lambda^2(k+2)^2-6\lambda^2,
\end{equation}
so that for large $k$ the layer lies on a sphere of radius
$\sqrt{3/2}\,\lambda(k+2)$ in $\mathbb{R}^{15}$. We write the coherent
state~\cite{Perelomov} of the point $[z]\in\CP^3$ ($z\in\mathbb{C}^4$, $|z|=1$) as
\begin{equation}\label{eq:zstate}
|z\rangle:=\frac{1}{\sqrt{k!}}\Bigl(\sum_{i=1}^4z_ia_i^\dagger\Bigr)^{k}|0\rangle\in V_k ;
\end{equation}
up to a phase it depends only on $[z]$, near which it is localized.

\subsection{Harmonic expansion and integration}
\label{sec:cp3harm}
\label{sec:why}

A field on layer $j$ is an $N_j\times N_j$ Hermitian matrix $\Phi^{(j)}$, i.e.\ an
operator on the Fock layer $V_{j-1}$,
\begin{equation}\label{eq:endtensor}
\End(V_{j-1})=(j-1,0,0)\otimes(0,0,j-1),\qquad \dim=N_j^2,
\end{equation}
where irreducible representations of $SU(4)$ are denoted by Dynkin labels $(a,b,c)$
and $\overline{(a,b,c)}=(c,b,a)$. Under $SU(4)$ this space decomposes
without multiplicity~\cite{BDLMO,DHMO},
\begin{equation}\label{eq:harm}
\End(V_{j-1})\cong\bigoplus_{p=0}^{j-1}(p,0,p),
\quad
d_p:=\dim(p,0,p)=\frac{(p+1)^2(p+2)^2(2p+3)}{12}=1,\,15,\,84,\,300,\,\dots,
\end{equation}
each irreducible component occurring exactly once. The continuum counterpart is the Cartan--Helgason theorem~\cite{Helgason} for the Hermitian
symmetric space $\CP^3=SU(4)/U(3)$, which states that
$L^2(\CP^3)=\bigoplus_{p\geq0}(p,0,p)$ with multiplicity one, of which~\eqref{eq:harm} is the truncation
to $p\leq j-1$. We use two facts below. First, elements of $\End(V_{j-1})$ are sums
of normal-ordered monomials $(a^\dagger)^\alpha a^\beta$, with multi-indices $\alpha,\beta$,
$(a^\dagger)^\alpha=\prod_i(a_i^\dagger)^{\alpha_i}$ and $|\alpha|=\sum_i\alpha_i=|\beta|=p$;
since $a^\beta$ removes $p$ particles, terms with $p>j-1$ vanish, which is the
origin of the cutoff $p\leq j-1$ on layer $j$. Second, the component $(p,0,p)$ is
spanned by traceless coefficient tensors $C_{\alpha\beta}$ and has the telescoping
dimension
\begin{equation}\label{eq:dptele}
d_p=N_{p+1}^2-N_p^2 ,
\end{equation}
a consequence of the commutativity of the $a_i^\dagger$.

The decomposition~\eqref{eq:harm} is the $\CP^3$ analogue of the harmonic
expansion, with $p$ playing the role of $l$:
\begin{equation}\label{eq:expansion}
\Phi^{(j)}=\sum_{p=0}^{j-1}\sum_{m=1}^{d_p}c^{(j)}_{pm}\,Y^{(j)}_{pm},
\end{equation}
where $\{Y^{(j)}_{pm}\}_m$ is a basis of $(p,0,p)$. Layer $j$
accommodates only modes with $p\leq j-1$; as for the fuzzy sphere, smaller matrices
have coarser angular resolution. Counting modes gives the increment identity
\begin{equation}\label{eq:increment}
\sum_{p=0}^{j-1}d_p=N_j^2
\end{equation}
(e.g.\ $1+15+84=100=10^2$), which is the telescoping sum of~\eqref{eq:dptele}.
Multiplicity-freeness is a structural input rather than a convenience and is used
repeatedly in the construction below. The quadratic Casimir, the analogue of $l(l+1)$, i.e.\ the eigenvalue of
$\hat C_2O:=\sum_A[\rho(T^A),[\rho(T^A),O]]$ on $(p,0,p)$, takes the values
\begin{equation}\label{eq:C2}
C_2(p)=p(p+3)=0,\,4,\,10,\,18,\,\dots .
\end{equation}

\paragraph{The trace as an integral}
\label{sec:trace}

Following the standard dictionary of fuzzy geometry~\cite{KT}, the
integral over layer $j$, with $dV$ the volume element of $\CP^3$, is represented by the trace,
\begin{equation}\label{eq:intrule}
\int_{\CP^3}f\,dV\;\longrightarrow\;\frac{\VolCP(\CP^3)}{N_j}\,\tr\,\Phi^{(j)},
\qquad
\VolCP(\CP^3)=\frac{(4\pi)^3}{3!}=\frac{32\pi^3}{3}.
\end{equation}
Setting $f\equiv1$ gives $(\VolCP/N_j)\tr\,\id_{N_j}=\VolCP$ on every layer: the
identity matrix represents the constant function $1$ irrespective of the layer.
This elementary fact will fix the normalization of the interlayer maps
(Section~\ref{sec:unique}). Moreover, $\tr\,Y^{(j)}_{pm}=0$ for $p>0$, since the
trace is an $SU(4)$-invariant linear functional and the only invariant subspace
is $p=0$.

The Fubini--Study metric $g_{\rm FS}$ carries an overall scale. In the common normalization
(holomorphic sectional curvature $4$, with $\CP^1$ a sphere of radius $1/2$) one
has $\VolCP=\pi^n/n!$ and Laplace spectrum $4p(p+n)$. Since the angular Laplacian
below has spectrum $C_2(p)=p(p+n)$, our metric is $\tilde g=4g_{\rm FS}$, with
volume
\begin{equation}\label{eq:volnorm}
\VolCP(\CP^n)=\frac{(4\pi)^n}{n!}.
\end{equation}
For $n=1$ this is the area $4\pi$ of the unit sphere, reproducing the measure
$\mathcal V_1=4\pi\lambda^3$ of~\cite{KT}. The volume enters only as an
overall constant.

\section{The fuzzy \texorpdfstring{$\mathbb{CP}^3$}{CP3} onion}
\label{sec:onion}

\subsection{Layers, fields and the angular Laplacian}
\label{sec:radius}

The onion stacks the fuzzy $\CP^3$'s $\End(V_{j-1})$, $j=1,\dots,M$, as concentric
layers. We assign to layer $j$ the \emph{dimension-based radius}
\begin{equation}\label{eq:rdim}
r_j=\lambda\,(6N_j)^{1/3}=\lambda\,[\,j(j+1)(j+2)\,]^{1/3}
\;\xrightarrow{\ j\gg1\ }\;\lambda(j+1).
\end{equation}
The motivation is the Weyl rule of geometric quantization: when $\CP^n$, of real
dimension $2n$, is fuzzified at scale $\lambda$, the number of modes should be
proportional to the volume $\propto r^{2n}$ divided by the cell volume
$\propto\lambda^{2n}$, so that $N_j^2\propto(r_j/\lambda)^{2n}$. The Weyl rule fixes
only this proportionality. We choose the constant so that for large $j$ the layers
are spaced by the step $\lambda$ ($r_j\simeq\lambda j$, since $N_j\simeq j^n/n!$).
Setting $N_j^2=(r_j/\lambda)^{2n}/(n!)^2$ then gives
\begin{equation}\label{eq:rn}
r_j^{\,n}=\lambda^n\,n!\,N_j ,
\end{equation}
which for $n=3$ is~\eqref{eq:rdim}. For $n=1$ it gives $r_j=\lambda j$, the
radius of~\cite{KT}. Other assignments, such as the uniform lattice
$r_j=\lambda(j+1)$ or the Casimir radius~\eqref{eq:XA2}, which is
$\lambda\sqrt{k(k+4)}$ ($k=j-1$) up to the overall factor $\sqrt{3/2}$, agree with~\eqref{eq:rdim} at leading order in $1/j$.
We adopt~\eqref{eq:rdim} because the radial measure then coincides exactly with the
mode count (see~\eqref{eq:radint} below).

As in~\cite{KT}, the radial integral is defined as a Riemann
sum,
\begin{equation}\label{eq:radint}
\int d^7x\,\psi=\int r^{6}\,dr\int_{\CP^3}d\Omega\,\psi
\;\longrightarrow\;
\sum_j r_j^{6}\,\lambda\,\frac{\VolCP}{N_j}\,\tr\,\Phi^{(j)}
=36\,\VolCP\,\lambda^{7}\sum_j N_j\,\tr\,\Phi^{(j)},
\end{equation}
where the last equality uses $r_j^6=\lambda^6(6N_j)^2$. The weight of layer $j$ is
therefore $w_j\propto N_j$, and including the trace of the identity matrix the radial
measure $h_j=w_jN_j=N_j^2$ equals exactly the number of modes of layer $j$. This is
consistent with the increment identity~\eqref{eq:increment}: advancing by one layer increases the measure by
the number of newly opened modes.

\paragraph{Fields as block-diagonal matrices}
\label{sec:fields}

A field on an $M$-layer onion is a block-diagonal matrix
\begin{equation}\label{eq:blockdiag}
\Psi=\mathrm{diag}\bigl(\Phi^{(1)},\Phi^{(2)},\dots,\Phi^{(M)}\bigr),
\qquad \sum_{j=1}^M N_j=\binom{M+3}{4}
\end{equation}
(cf.~\cite{KT}). By~\eqref{eq:radint}, the inner product (the integral
of the product of two fields) is, up to an overall constant, given by $G$:
\begin{equation}\label{eq:metric}
\langle\Psi,\Psi'\rangle_G=\sum_{j=1}^{M}N_j\,\tr\bigl(\Phi^{(j)}\Phi'^{(j)}\bigr).
\end{equation}
The number of degrees of freedom is $\sum_{j=1}^{M}N_j^2\simeq M^7/252$: one
radial direction with $M$ layers and six angular directions with $\sim M^6$ modes
per layer. For $M\to\infty$ the fields $\bigoplus_j\End(V_{j-1})$ are the operators
on the Fock space of the four bosons that preserve the particle number. With two
bosons the same algebra is $\mathbb{R}^3_\lambda$, which is obtained in this way from
the Moyal space $\mathbb{R}^4$~\cite{GLMV,GP}.

\paragraph{The angular Laplacian}
\label{sec:KL}

Each layer carries the angular Laplacian of fuzzy $\CP^3$,
\begin{equation}\label{eq:KL}
K_L\Phi^{(j)}=\frac{1}{r_j^2}\sum_{A=1}^{15}\bigl[\rho(T^A),[\rho(T^A),\Phi^{(j)}]\bigr]
=\frac{1}{r_j^2}\,\hat C_2\,\Phi^{(j)},
\qquad
\hat C_2\,Y^{(j)}_{pm}=p(p+3)\,Y^{(j)}_{pm},
\end{equation}
the direct analogue of $[L_i,[L_i,\cdot\,]]$ in~\cite{KT}. Since $K_L$
is a positive operator (representing $-\Delta_{\CP^3}$), the negative Laplacian
of the onion is
\begin{equation}\label{eq:Delta}
\Delta:=K_R-K_L ,
\end{equation}
with the radial part $K_R$ constructed in Section~\ref{sec:KR}. Restricted to
sector $p$ it reads $K_R-C_2(p)/r^2$. Since the cutoff $p\leq j-1$ grows with the
radius, sector $p$ exists only on layers $j\geq p+1$, and its radial problem lives
on the truncated chain $j=p+1,\dots,M$.

\subsection{Interlayer maps}
\label{sec:updown}

The central ingredient of the onion is a map comparing fields on neighbouring
layers of different size. We call a map carrying fields from layer $j$ to layer
$j+1$ an Up map, and one in the opposite direction a Down map. For $S^2$, Kov\'a\v{c}ik and Tekel defined it by copying
expansion coefficients, padding missing modes with zeros and discarding surplus
modes~\cite{KT}. For $\CP^3$ explicit bases of $(p,0,p)$
($d_p=300,825,\dots$) are unwieldy, and we first give a basis-free construction.

The boson operators define a natural map between layers,
\begin{equation}\label{eq:U2q}
U_{2q}:\End(V_{j-1})\to\End(V_{j}),\qquad
U_{2q}(O)=\sum_{i=1}^4 a_i^\dagger\,O\,a_i ,
\end{equation}
from layer $j$ to layer $j+1$, which adds one particle on each side of $O$ and manifestly commutes with $SU(4)$;
it is the embedding by which~\cite{DHMO} build the polarization tensors
of fuzzy $\CP^N$.
Since~\eqref{eq:harm} is multiplicity-free, Schur's lemma implies that $U_{2q}$
acts on each $(p,0,p)$ block as a scalar $s_p(j)$ with respect to
Hilbert--Schmidt (HS) orthonormal bases. The identities
below are in~\cite{DHMO} (the second follows from the first
and~\eqref{eq:C2}; $U_{2q}^\dagger$ is the adjoint with respect to the HS inner product):
\begin{equation}\label{eq:polarid}
U_{2q}^\dagger U_{2q}=j(j+3)\,\id-\hat C_2 ,
\qquad
s_p(j)^2=(j-p)(j+p+3).
\end{equation}
Because $s_p(j)$ depends on $p$, $U_{2q}$ itself is not a suitable Up map, but the
first identity gives its polar decomposition explicitly:
\begin{equation}\label{eq:Uhat}
\Uhat=U_{2q}\,\bigl[j(j+3)-\hat C_2\bigr]^{-1/2}.
\end{equation}
This is an $SU(4)$-equivariant isometry acting on each block as a positive scalar,
i.e.\ the canonical identification of the $(p,0,p)$ block on layer $j$ with the
same block on layer $j+1$.

\paragraph{Normalization and uniqueness of the Up map}
\label{sec:unique}

By multiplicity-freeness and Schur's lemma, a general $SU(4)$-equivariant
comparison map $B_j$ is specified by one positive number per block,
\begin{equation}\label{eq:tp}
B_j\bigl(e^{(j)}_{p}\bigr)=t_p(j)\,e^{(j+1)}_{p},
\end{equation}
where $e^{(j)}_p$ denotes an HS-normalized element of the $(p,0,p)$ block. We
consider the following three conditions on $B_j$:
\begin{center}
\begin{tabular}{l@{\qquad}l@{\qquad}l}
(i) & equivariance & $B_j$ is $SU(4)$-equivariant \\[2pt]
(ii) & conformality & $t_p(j)=t(j)$ is independent of $p$ \\[2pt]
(iii) & unitality & $B_j(\id_{N_j})=\id_{N_{j+1}}$
\end{tabular}
\end{center}
Unitality expresses the fact that the identity matrix
represents the constant function $1$ on every layer (Section~\ref{sec:trace}).
Since $\Uhat$ is an isometry, $\Uhat(\id_{N_j})=\sqrt{N_j/N_{j+1}}\,\id_{N_{j+1}}$,
and (i)--(iii) determine the Up map uniquely:
\begin{equation}\label{eq:Utilde}
\Util:=\sqrt{\frac{N_{j+1}}{N_j}}\;\Uhat,\qquad \Util(\id)=\id .
\end{equation}
Only this normalization makes the radial energy of the uniform configuration
$\Phi\propto\id$ vanish exactly; any other constant factor leaves a term of order
$\lambda^{-2}$ and precludes a continuum limit.

Unitality, however, constrains only the $p=0$ sector: the identity lies in the
$p=0$ block, and the trace annihilates all harmonics with $p>0$. Unitality,
conservation of the layer integral, and vanishing of the radial energy of the
uniform configuration are therefore statements about $p=0$ alone and leave
$t_1,t_2,\dots$ unconstrained. Conformality (ii) is not an independent
assumption, however; it follows from Theorem~\ref{thm:copymain} below, i.e.\ from
the coefficient-copying premise of~\cite{KT}.

The key is to give a precise meaning, on $\CP^3$, to the premise
of~\cite{KT} that the $m$-th mode on layer $j$ and the $m$-th mode on
layer $j+1$ are the same. On $S^2$ a spherical harmonic is a polynomial
$C_{i_1\cdots i_l}x_{i_1}\cdots x_{i_l}$ with a traceless symmetric coefficient
tensor $C$; on the fuzzy sphere one replaces $x_i$ by $L_i$, and the same $C$
labels the same mode on every layer. On $\CP^3$ this role is played by
polynomials in the Schwinger bosons: for multi-indices $\alpha,\beta$ with
$|\alpha|=|\beta|=p$ and traceless coefficients $C_{\alpha\beta}$, let
\begin{equation}\label{eq:Ytensor}
Y^{(j)}(C):=\sum_{|\alpha|=|\beta|=p}C_{\alpha\beta}\,(a^\dagger)^\alpha a^\beta\Big|_{V_{j-1}},
\qquad j\geq p+1 .
\end{equation}
Then $Y^{(j)}(C)$ lies in the $(p,0,p)$ block of every layer,\footnote{%
The map $C\mapsto Y^{(p+1)}(C)$ is an $SU(4)$-equivariant isomorphism onto
$\End(V_p)$, and the trace map contracting one pair of indices,
$\mathcal T:\mathrm{Sym}^p(\mathbb{C}^4)\otimes\mathrm{Sym}^p(\mathbb{C}^4)^*\to\mathrm{Sym}^{p-1}(\mathbb{C}^4)\otimes\mathrm{Sym}^{p-1}(\mathbb{C}^4)^*$,
is surjective, so $\dim\ker\mathcal T=N_{p+1}^2-N_p^2=d_p$ and by multiplicity one the traceless
$C$ span exactly $(p,0,p)$. By~\eqref{eq:Cup} below, $Y^{(j)}(C)\neq0$ for
$j\geq p+1$.} and elements with the same $C$ can be regarded as ``the same
mode.'' The construction of $\Util$ itself uses no basis; a basis enters only
for comparison with the prescription of~\cite{KT}, which is formulated in terms
of expansion coefficients.

\begin{theorem}[The copy prescription reproduces $\Util$]
\label{thm:copymain}
Normalize the basis $\hat Y^{(j)}_{pm}$ of $(p,0,p)$ on each layer by the layer
integral~\eqref{eq:intrule}, as
in~\cite{KT}, and let $U_{\rm basis}$ be the map that copies
coefficients in this basis and pads missing modes with zeros. Then
$U_{\rm basis}=\Util$ for all $j$ and $p$.
\end{theorem}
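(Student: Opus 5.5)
The plan is to compute both maps on the same family of vectors $Y^{(j)}(C)$ from~\eqref{eq:Ytensor} and to check that they agree block by block. Both maps are linear and preserve the harmonic decomposition, so this suffices. First I make precise what ``copying coefficients'' means. Fix $p$ and a basis $\{C_m\}_{m=1}^{d_p}$ of traceless coefficient tensors, chosen HS-orthonormal in the sense that the $Y^{(p+1)}(C_m)$ are orthonormal. On every layer $j\ge p+1$ define
\[
\hat Y^{(j)}_{pm}:=\frac{Y^{(j)}(C_m)}{\nu_p(j)},
\]
where $\nu_p(j)$ is the norm of $Y^{(j)}(C_m)$ induced by~\eqref{eq:intrule}, namely $\nu_p(j)^2=(\VolCP/N_j)\,\tr\bigl(Y^{(j)}(C_m)^\dagger Y^{(j)}(C_m)\bigr)$. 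This is the sense in which the $m$-th mode is the same mode on every layer.

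Two points have to be checked for this basis. First, the map $C\mapsto Y^{(j)}(C)$ is $SU(4)$-equivariant from $(p,0,p)$ into $\End(V_{j-1})$, so by Schur's lemma and multiplicity-freeness~\eqref{eq:harm} it is a scalar multiple of an isometry. Hence $\nu_p(j)$ is independent of $m$, and the $\hat Y^{(j)}_{pm}$ are orthonormal for the integral norm on each layer. Second, the copy map is then $U_{\rm basis}\hat Y^{(j)}_{pm}=\hat Y^{(j+1)}_{pm}$ for $p\le j-1$. The new block $p=j$ on layer $j+1$ receives the padded zero coefficients.

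The key computation is the action of $U_{2q}$ on $Y^{(j)}(C)$. Since the $a_i$ commute among themselves, $a^\beta a_i=a_i a^\beta$, and therefore
\[
\sum_i a_i^\dagger\,(a^\dagger)^\alpha a^\beta\, a_i=(a^\dagger)^\alpha\Bigl(\sum_i a_i^\dagger a_i\Bigr)a^\beta=(a^\dagger)^\alpha\,\hat N\,a^\beta .
\]
On $V_j$ the operator $a^\beta$ lands in particle number $j-p$, so $\hat N$ acts there as $j-p$. This gives
\[
U_{2q}\,Y^{(j)}(C)=(j-p)\,Y^{(j+1)}(C),
\]
which is presumably the identity~\eqref{eq:Cup} cited in the footnote. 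Taking HS norms and using that $U_{2q}$ acts on the block as the scalar $s_p(j)$ from~\eqref{eq:polarid} yields
\[
\frac{\|Y^{(j+1)}(C)\|_{\HS}}{\|Y^{(j)}(C)\|_{\HS}}=\frac{s_p(j)}{j-p},\qquad
\frac{\nu_p(j+1)}{\nu_p(j)}=\sqrt{\frac{N_j}{N_{j+1}}}\,\frac{s_p(j)}{j-p}.
\]

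Now I apply $\Util$. By~\eqref{eq:Uhat} and~\eqref{eq:Utilde},
\[
\Util\,Y^{(j)}(C)=\sqrt{\frac{N_{j+1}}{N_j}}\,\frac{j-p}{s_p(j)}\,Y^{(j+1)}(C),
\]
using that $[j(j+3)-\hat C_2]^{-1/2}$ acts on the block as $1/s_p(j)$, with $s_p(j)>0$ for $p\le j-1$. Dividing by $\nu_p(j)$ and inserting the norm ratio above, all factors cancel, so $\Util\hat Y^{(j)}_{pm}=\hat Y^{(j+1)}_{pm}$. Moreover, $\Util$ maps $\bigoplus_{p\le j-1}(p,0,p)$ blockwise and produces no $p=j$ component, which matches the zero padding. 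Hence $U_{\rm basis}=\Util$ on every block and for every $j$.

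The main obstacle is conceptual rather than computational. One must justify that copying in a layer-normalized basis is well defined independently of the choice of $\{C_m\}$, which is the Schur argument that $\nu_p(j)$ does not depend on $m$. One must also establish the normal-ordering identity for $U_{2q}Y^{(j)}(C)$ rigorously on the restricted space $V_j$. If the commutation step needs more care, I would verify it directly on monomial states $(a^\dagger)^\gamma|0\rangle$.
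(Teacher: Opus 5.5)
Your proposal is correct and follows essentially the paper's proof. Both arguments combine the normal-ordering identity $U_{2q}Y^{(j)}(C)=(j-p)\,Y^{(j+1)}(C)$ with the block scalar $s_p(j)$ of~\eqref{eq:polarid} to fix the norm ratio, and then convert between the HS and integral normalizations through the factor $\sqrt{N_{j+1}/N_j}$. Two points you add are left implicit in the paper, and both are sound. Your Schur argument shows that the normalization constant does not depend on $m$, so the transported basis stays orthonormal. Your norm recursion with $s_p(j)>0$ also guarantees that $Y^{(j)}(C)\neq0$ on every layer $j\geq p+1$.
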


\begin{proof}
Since $\sum_ia_i^\dagger(a^\dagger)^\alpha a^\beta a_i=(a^\dagger)^\alpha\hat Na^\beta$ and
$a^\beta$ removes $p$ particles on $V_j$,
\begin{equation}\label{eq:Cup}
U_{2q}\,Y^{(j)}(C)=(j-p)\,Y^{(j+1)}(C).
\end{equation}
Comparing norms with~\eqref{eq:polarid} for the HS-normalized
$e^{(j)}_p:=Y^{(j)}(C)/\|Y^{(j)}(C)\|_{\HS}$ gives $\Uhat e^{(j)}_p=+e^{(j+1)}_p$, the
sign following from $j-p>0$. The basis normalized by the integration rule is
$\hat Y^{(j)}_p=\sqrt{N_j}\,e^{(j)}_p$, so $U_{\rm basis}=\sqrt{N_{j+1}/N_j}\,\Uhat=\Util$;
the block $p=j$, which first appears on layer $j+1$, has no preimage, which reproduces
the zero padding.
\end{proof}

The basis-free map~\eqref{eq:Utilde} is thus the prescription of~\cite{KT}
carried out literally on $\CP^3$, including the zero padding of the top block,
and no condition of type (ii) enters.

The premise itself is a choice of what ``the same function on two layers''
means, and it is instructive to see what other choices do. Deforming
$t_p(j)=\sqrt{N_{j+1}/N_j}\,(1+\delta_p/j)$ keeps the dimension seven in the
continuum limit but shifts the centrifugal term,
$p(p+3)\to p(p+3)+\delta_p(\delta_p+5)$, in conflict with the spectrum of the
angular Laplacian~\eqref{eq:KL}, which acts within each layer
(Appendix~\ref{app:centrifugal}). Deviations of relative order $j^{-2}$, by
contrast, leave the continuum limit unchanged. An example is
\begin{equation}\label{eq:Upos}
U^{\rm pos}:=\frac1j\,U_{2q},\qquad U^{\rm pos}(\id)=\id ,
\end{equation}
which is $SU(4)$-equivariant, unital and completely positive, and preserves
exactly the symbol $\langle z|\Phi|z\rangle$ in the coherent states~\eqref{eq:zstate}.
The basis $\hat Y^{(j)}_p=\sqrt{N_j}\,e^{(j)}_p$ of Theorem~\ref{thm:copymain} is, up
to a layer-independent constant, the set of polarization tensors
of~\cite{DHMO}, whose symbols are $T_{j-1}(p)^{1/2}Y_p(z)$ with
$Y_p(z)$ independent of $j$ and $T_L(p)=L!\,(L+3)!/[(L-p)!\,(L+p+3)!]$~\cite{DHMO}. Here $T_L(p)$ is the normalization factor giving the amplitude $T_{j-1}(p)^{1/2}$ of the symbol on each layer, and it tends to $1$ as $L\to\infty$. Hence
\begin{equation}\label{eq:UposT}
U^{\rm pos}\hat Y^{(j)}_p=\Bigl[\frac{T_{j-1}(p)}{T_j(p)}\Bigr]^{1/2}\hat Y^{(j+1)}_p,
\qquad
\frac{T_{j-1}(p)}{T_j(p)}=\frac{(j-p)(j+p+3)}{j(j+3)}=1-\frac{p(p+3)}{j(j+3)},
\end{equation}
whereas $\Util$ maps $\hat Y^{(j)}_p$ to $\hat Y^{(j+1)}_p$; the ratio is $1+O(j^{-2})$.

This map is not hypothetical. For $\CP^1$ (two bosons, $N_j=j$, $r_j=\lambda j$)
the onion Laplacian built from $U^{\rm pos}$ coincides, on every layer below the
outermost, with the Laplacian
$-\lambda^{-2}(\hat N+1)^{-1}[a_\alpha^\dagger,[a_\alpha,\cdot\,]]$ of the
noncommutative space of G\'alikov\'a and Pre\v{s}najder~\cite{GP}: written
layer by layer, the double commutator is
$U_{2q}\Phi^{(j-1)}+U_{2q}^\dagger\Phi^{(j+1)}-2j\,\Phi^{(j)}$, and the term
$\hat C_2/(\lambda j)^2$ contained in $U^{{\rm pos}\dagger}U^{\rm pos}/\lambda^2$ is
exactly the angular Laplacian. The three-dimensional constructions of~\cite{GP}
and~\cite{KT} are therefore the onion Laplacians of $U^{\rm pos}$ and of $\Util$
for $\CP^1$, and the present construction contains both. For $\CP^3$ the two maps have the
same continuum limit but different convergence rates, which we compare in
Section~\ref{sec:rate}.

\paragraph{The Down map}
\label{sec:down}

Imposing the downward analogues of (i)--(iii) on the map in the opposite direction,
$D_j:\End(V_j)\to\End(V_{j-1})$ from layer $j+1$ to layer $j$, determines it as
well. By (i) and multiplicity one, $D_j$ acts on each block as a scalar, and the
top block $p=j$ of $\End(V_j)$, which has no target in $\End(V_{j-1})$, is
annihilated: the rule of~\cite{KT} that surplus modes are discarded is enforced by
representation theory. (ii) and (iii) fix the rest,
\begin{equation}\label{eq:D}
D_j=\sqrt{\frac{N_j}{N_{j+1}}}\;\Uhat_j^\dagger
=\sqrt{\frac{N_j}{N_{j+1}}}\;\bigl[j(j+3)-\hat C_2\bigr]^{-1/2}
\Bigl(\sum_{i=1}^4 a_i\,(\cdot)\,a_i^\dagger\Bigr),
\end{equation}
which satisfies $D_j\Util_j=\mathrm{id}$ and $\Util_jD_j=P_{\leq j-1}$, the
orthogonal projector removing $p=j$. The map $\sum_ia_i(\cdot)a_i^\dagger$ on the
right is the HS adjoint of $U_{2q}$, whose block scalar $s_p(j)$ vanishes precisely
at $p=j$ by~\eqref{eq:polarid}: truncation is not imposed but arises as a zero of
the block scalar.\footnote{The HS adjoint $\Util^\dagger=\frac{N_{j+1}}{N_j}D$
preserves the bare trace but is not unital. The two are dual, as maps
transporting observables (functions) and states (densities), respectively; the
pair $(\Util,D)$ transports observables.}

\subsection{The radial Laplacian}
\label{sec:KR}

We define the radial kinetic energy by a single expression, without separate
treatment of the boundaries:
\begin{equation}\label{eq:EU}
E[\Psi]=\frac{1}{\lambda^2}\sum_{j=1}^{M-1}N_j\,
\bigl\|\Phi^{(j+1)}-\Util\,\Phi^{(j)}\bigr\|^2_{\HS},
\qquad \|A\|^2_{\HS}=\tr A^\dagger A .
\end{equation}
This expression is the integration rule~\eqref{eq:radint} applied to
$(\partial_r\phi)^2$. Since $\Uhat$ is an isometry, the HS norm of the difference
carries a factor $N_{j+1}$ from layer $j+1$, and the effective bond weight is
$N_jN_{j+1}$. By~\eqref{eq:rdim}, $r_j^3=6\lambda^3N_j$, so that exactly
\begin{equation}\label{eq:midpoint}
36\,\lambda^{6}\,N_jN_{j+1}=r_j^{3}\,r_{j+1}^{3}.
\end{equation}
The right-hand side is the geometric mean of $r_j^6$ and $r_{j+1}^6$, i.e.\ $r^6$
evaluated at the midpoint of the bond. Applying the prescription
of~\eqref{eq:radint} (radial factor $r^6\Delta r$, angular factor
$\frac{\VolCP}{N}\tr$) to
$(\partial_r\phi)^2\to\bigl(\Phi^{(j+1)}-\Util\Phi^{(j)}\bigr)^2/\lambda^2$ yields
\begin{equation}\label{eq:EisInt}
\int d^7x\,\Bigl(\frac{\partial\phi}{\partial r}\Bigr)^{2}=\mathcal V\,E[\Psi],
\qquad
\mathcal V=36\,\VolCP(\CP^3)\,\lambda^{7}=384\pi^3\lambda^{7}.
\end{equation}
For general $\CP^n$ the radii~\eqref{eq:rn} make the geometric mean of the radial
measures of adjacent layers equal to $\lambda^{2n}(n!)^2N_jN_{j+1}$, and the same
cancellation occurs. The weight of a bond (a pair of adjacent layers $j$, $j+1$), i.e.\ the prefactor
$a_j=N_j$ in~\eqref{eq:EU}, thus follows from the integration rule, with the radial factor evaluated at the geometric midpoint, without matching
to $S^2$. The midpoint rule is a choice: the arithmetic mean of $r_j^6$ and
$r_{j+1}^6$ would give $a_j=(N_j^2+N_{j+1}^2)/(2N_{j+1})$, which does not close on
the dimensions of the model.

Although the prefactor $N_j$ in~\eqref{eq:EU} appears asymmetric, the effective
weight $N_jN_{j+1}$ realizes the symmetric midpoint rule~\eqref{eq:midpoint},
because the HS norm is taken on layer $j+1$. In the discretization of a
Sturm--Liouville operator with measure $w$ and conductance $a$, $\frac1w\frac{d}{dr}\bigl(a\frac{d}{dr}\bigr)$,
evaluating the conductance $a$ at the midpoint of the bond yields second-order
accuracy, whereas evaluation at an endpoint yields first-order
accuracy. How these choices affect the order of convergence is compared in
Section~\ref{sec:rate}.

The radial Laplacian is minus one half of the gradient $\nabla_{\!G}E$ of $E$ with
respect to the inner product $G$ of~\eqref{eq:metric}, $K_R=-\frac12\nabla_{\!G}E$. We write the bond
difference as
\begin{equation}\label{eq:bonddiff}
\Delta_j:=\Phi^{(j+1)}-\Util_j\Phi^{(j)}\in\End(V_j).
\end{equation}

Varying $\delta\Delta_j=\delta\Phi^{(j+1)}-\Util_j\,\delta\Phi^{(j)}$ and shifting the
summation index (a discrete integration by parts) gives
\begin{equation}\label{eq:gradE}
(\nabla_{\!G}E)^{(j)}=\frac{2}{\lambda^{2}}
\Bigl[\frac{N_{j-1}}{N_j}\,\Delta_{j-1}-\Util_j^{\dagger}\Delta_j\Bigr],
\end{equation}
where $\Util_j^\dagger$ is the HS adjoint of $\Util_j$, and $\Delta_0$ (removed by
$N_0=0$) and $\Delta_M$ are absent.

Thus $\Util^\dagger$ is not inserted by hand; the apparent asymmetry between
inward and outward bonds is absorbed by the metric, with respect to which $K_R$
is exactly self-adjoint; and the boundary conditions follow from the index shift:
layer $j=1$ has no inward bond and layer $j=M$ has no outward bond.
Equation~\eqref{eq:gradE} gives the \emph{flux form}
\begin{equation}\label{eq:KRflux}
\bigl(K_R\Psi\bigr)^{(j)}=\frac{1}{\lambda^{2}}\Bigl[
\Util_j^{\dagger}\bigl(\Phi^{(j+1)}-\Util_j\Phi^{(j)}\bigr)
-\frac{N_{j-1}}{N_j}\bigl(\Phi^{(j)}-\Util_{j-1}\Phi^{(j-1)}\bigr)\Bigr],
\end{equation}
in which the two brackets are the heat fluxes through the outward and inward
bonds. On the innermost layer $N_0=0$ removes the inward term, and on the
outermost layer the absence of an outward bond imposes a Neumann
condition.\footnote{Setting instead $\Phi^{(M+1)}:=0$ reproduces the Dirichlet
convention of~\cite{KT}.} At the inner end of the sector-$p$ chain,
$\Util_{p}\Phi^{(p)}$ has no $p$ component, leaving the mass term
$-\frac{N_p}{N_{p+1}}\Phi^{(p+1)}$. This \emph{ghost term} agrees with the
zero-padding convention of~\cite{KT}.

Using $\Util_j^\dagger\Util_j=\frac{N_{j+1}}{N_j}\id$ and
$\Util_j^\dagger=\frac{N_{j+1}}{N_j}D_j$, the flux form becomes
\begin{equation}\label{eq:KRpsi}
\bigl(K_R\Psi\bigr)^{(j)}=\frac{1}{\lambda^{2}}\left[
\frac{N_{j+1}}{N_j}\,D_j\Phi^{(j+1)}
+\frac{N_{j-1}}{N_j}\,\Util_{j-1}\Phi^{(j-1)}
-\frac{N_{j+1}+N_{j-1}}{N_j}\,\Phi^{(j)}\right],
\end{equation}
which corresponds term by term to~\cite{KT},
\begin{equation}\label{eq:KT27}
K_R\Psi=\sum_{N,l,m}\frac{(N+1)c^{(N+1)}_{lm}+(N-1)c^{(N-1)}_{lm}-2N\,c^{(N)}_{lm}}{N\lambda^2}\,Y^{(N)}_{lm},
\end{equation}
upon setting $N_j=j$ and identifying $D$ and $\Util$ with coefficient copying
($N$ is the matrix size of a layer and $c^{(N)}_{lm}$ the expansion coefficients).
The correspondence holds on every layer except the outermost, where~\cite{KT}
set $U\Phi^{(M)}=0$, i.e.\ impose a Dirichlet condition, whereas~\eqref{eq:KRflux}
has no outward bond and imposes a Neumann condition; for heat
transfer~\cite{KT} also use a Neumann condition. With $U^{\rm pos}$ in place of
$\Util$ the same expression reproduces, for $N_j=j$, the Laplacian of~\cite{GP}
(Section~\ref{sec:unique}).
The fuzzy onion of~\cite{KT} and the fuzzy $\CP^3$ onion are thus the cases $n=1$
and $n=3$ of a single construction with $n+1$ bosons, for which the identities of
Section~\ref{sec:updown} hold with $j(j+3)\to j(j+n)$~\cite{DHMO}. For $n=1$ several
choices are degenerate, because $N_j^2=j^2=j^{D-1}$, with $D=2n+1$ the dimension,
holds identically; this is why
matching to~\cite{KT} alone could not fix the $\CP^3$ construction.
For the coefficients $c_j$ of a single sector ($\Phi^{(j)}=c_j\id$ for $p=0$, and
the coefficient of a basis element transported by $\Util$ for $p>0$),
\begin{align}\label{eq:KR}
(K_R\,c)_j&=\frac{1}{\lambda^2}\left[
\frac{N_{j+1}}{N_j}\,c_{j+1}+\frac{N_{j-1}}{N_j}\,c_{j-1}
-\frac{N_{j+1}+N_{j-1}}{N_j}\,c_j\right]\nonumber\\
&=\frac{1}{\lambda^2}\left[
\frac{j+3}{j}c_{j+1}+\frac{j-1}{j+2}c_{j-1}-\frac{2(j^2+2j+3)}{j(j+2)}c_j\right].
\end{align}
The chain of sector $p$ obeys the same expression on $j=p+1,\dots,M$, with the
ghost term at the inner end and the Neumann condition at the outer end.

The operator~\eqref{eq:KR} is self-adjoint with respect to the weight
$h_j=N_j^2$. Conjugation by $S=\mathrm{diag}(N_j)$ symmetrizes it and reveals a
simple structure:
\begin{equation}\label{eq:sym}
\bigl(S\,K_R\,S^{-1}\bigr)_{j,j\pm1}=\frac{1}{\lambda^2}\qquad\text{for all $j$ and $p$}.
\end{equation}
The off-diagonal entries are independent of both layer and sector, and the
geometry resides entirely in the diagonal. The structural properties of
Section~\ref{sec:props} follow almost immediately from this form, which is also
the form used in the numerical analysis of Section~\ref{sec:continuum}.

The inner end of every chain is derived (through the ghost term), but at the
outer end three conventions are natural: the derived form, in which layer $M$
has full width $\lambda$ and no outward bond; a half cell, $W_M=N_M^2/2$,
with no outward bond; and zero padding, $\Phi^{(M+1)}:=0$, the Dirichlet
convention of~\cite{KT}. These describe balls of effective radius $r_M+\lambda/2$, $r_M$ and
$r_M+\lambda$, respectively. For the derived form and the half cell the operator and the measure describe
the same ball, whereas for zero padding an $O(\lambda)$ mismatch remains, because the
measure extends only to $r_M+\lambda/2$. We adopt the derived form,
since the integration rule~\eqref{eq:radint} assigns width $\lambda$ to every
layer.

\paragraph{Structural properties of $K_R$}
\label{sec:props}

Several properties of $K_R$ follow automatically from the variational
definition. Since $E$ is a symmetric quadratic form, $K_R=-\frac12G^{-1}\nabla E$
is self-adjoint with respect to $G$, and $E\geq0$ makes it negative
semi-definite. The zero mode $K_R(\id)=0$ is equivalent to $\Util(\id)=\id$
(eq.~\eqref{eq:Utilde}), and together with self-adjointness it implies
conservation of $Q=\sum_jN_j^2c_j$. Positivity of the off-diagonal entries
of~\eqref{eq:KR} in each sector chain yields a discrete maximum principle. The
latter concerns the coefficients of the sector chains; for $p>0$ these are
amplitudes of angular harmonics, and the evolution does not preserve positivity
of the matrices $\Phi^{(j)}$: a positive initial matrix can develop negative
eigenvalues (an example is given in Section~\ref{sec:localized}).

The properties with independent content are the exact reduction to the
construction of~\cite{KT} for $N_j=j$ (shown above) and the continuum
limit (Section~\ref{sec:continuum}), in which the sectors $p>0$ are tested
directly and independently of the $p=0$ normalization.

The Laplacian defines the action of a scalar field theory on the onion
(cf.~\cite{KT}),
\begin{equation}\label{eq:action}
S[\Psi]=\mathcal V\sum_{j=1}^{M}N_j\,\tr\Bigl(
-a\,\Phi^{(j)}\bigl(\Delta\Psi\bigr)^{(j)}+b\,\bigl(\Phi^{(j)}\bigr)^2+c\,\bigl(\Phi^{(j)}\bigr)^4\Bigr),
\end{equation}
with $\mathcal V$ given in~\eqref{eq:EisInt} and couplings $a,b,c$, which defines a statistical model of a
collection of Hermitian matrices. In this paper we study only the free operator
$\Delta$, i.e.\ the kinematics of the onion; interactions are not considered.

\section{Continuum limit: the metric cone over \texorpdfstring{$\mathbb{CP}^3$}{CP3}}
\label{sec:continuum}

All ingredients of Section~\ref{sec:onion} (the radii~\eqref{eq:rdim}, the bond
weight $a_j=N_j$, the measure $h_j=N_j^2$, the interlayer map $\Util$ and the
boundary terms) enter a single object, the spectrum of $\Delta$. In this section
we compute this spectrum, identify the continuum operator to which it
converges, and compare the resulting free propagator with its continuum
counterpart. The result is the Laplacian on the metric cone $C(\CP^3)$, not that of
flat $\mathbb{R}^7$.

\subsection{Radial problems by sector and candidate continuum operators}
\label{sec:reduction}

Numerical results use $\lambda=1$ unless stated otherwise. By multiplicity-freeness, $\Delta$ is block-diagonal in $p$, and within sector
$(p,0,p)$ it reduces to the tridiagonal operator
\begin{equation}\label{eq:Deltap}
-\Delta^{(p)}=K_L^{(p)}-K_R,\qquad \bigl(K_L^{(p)}\bigr)_j=\frac{p(p+3)}{r_j^2}
\end{equation}
on the chain $j=p+1,\dots,M$, with $K_R$ given by~\eqref{eq:KR}, a ghost term at
$j=p+1$ and a Neumann condition at $j=M$. By~\eqref{eq:sym}, the symmetrized operator $H^{(p)}:=-S\Delta^{(p)}S^{-1}$ is the
tridiagonal matrix with diagonal entries $H^{(p)}_{jj}=d_j$, off-diagonal (nearest-neighbour)
entries $H^{(p)}_{j,j\pm1}=e_j$ and all other entries zero, where
\begin{equation}\label{eq:tridiag}
d_j=\frac{N_{j+1}^{\rm eff}+N_{j-1}}{\lambda^2N_j}+\frac{p(p+3)}{r_j^2},
\qquad e_j=-\frac{1}{\lambda^2},
\qquad N_{M+1}^{\rm eff}:=0,\quad N_{j+1}^{\rm eff}:=N_{j+1}\ (j<M).
\end{equation}
Retaining the term $N_{j-1}=N_p$ at $j=p+1$ is the ghost term and the
absence of $N_{M+1}$ is the Neumann condition. The off-diagonal entries are
constant, and all $p$-dependence resides in the diagonal. The angular directions are solved
algebraically, leaving a one-dimensional lattice problem per sector, which we
diagonalize with a symmetric tridiagonal eigensolver (at large $M$, to avoid
round-off, we obtain the eigenvalues by shooting with the difference form of the
symmetrized chain).

\paragraph{Candidate continuum operators}
\label{sec:candidates}

The natural candidate is the metric cone $C(\CP^3)$ with metric
$dr^2+r^2\tilde g$, $\tilde g=4g_{\rm FS}$ (Section~\ref{sec:trace}). Its
Laplacian in sector $p$ is
\begin{equation}\label{eq:Lp}
L_p=\partial_r^2+\frac{6}{r}\partial_r-\frac{p(p+3)}{r^2}.
\end{equation}
With $f=r^{-5/2}g$, the equation $L_pf=-k^2f$ reduces to Bessel's equation, and
the solutions regular at the apex are, up to an arbitrary normalization,
\begin{equation}\label{eq:nup}
f(r)\propto r^{-5/2}J_{\nu_p}(kr),\qquad
\nu_p=\sqrt{\Bigl(\frac52\Bigr)^2+p(p+3)}=\sqrt{\Bigl(p+\frac32\Bigr)^2+4}.
\end{equation}
where $J_\nu$ is the Bessel function of the first kind.
Bessel functions appear because of the cone geometry itself. The metric
$dr^2+r^2\tilde g$ is only multiplied by $c^2$ under $r\to cr$, so once the Laplacian is
separated in eigenfunctions of the link, the only length scale left in the radial
equation is $1/k$. The solutions are therefore functions of $kr$, and the link
eigenvalue $p(p+3)$ fixes only the Bessel order $\nu_p$. This generalizes the
partial waves $r^{-(d-2)/2}J_{l+(d-2)/2}(kr)$ of flat $d$-dimensional space
(spherical Bessel functions for $d=3$) and holds for the cone over any
link~\cite{Cheeger}. An outer boundary at $r=R$ discretizes $k$. The outer condition of
Section~\ref{sec:KR} corresponds to the Neumann condition $f'(R)=0$, which for
$f=r^{-5/2}J_{\nu_p}(kr)$ reads
\begin{equation}\label{eq:neumann}
x\,J_{\nu}'(x)=\frac52\,J_{\nu}(x),\qquad x=kR,\quad \nu=\nu_p ,
\end{equation}
and the allowed values of $k$ are the positive roots $x$ of~\eqref{eq:neumann}
divided by $R$. We take $R=r_M+\lambda/2$, the outer edge of the last cell of width
$\lambda$ assigned by the integration rule.
Flat $\mathbb{R}^7$ is the cone over the round unit six-sphere, whose Laplacian
has spectrum $l(l+5)$. The same substitution then gives
\begin{equation}\label{eq:nuflat}
\nu_l=\sqrt{\Bigl(\frac52\Bigr)^2+l(l+5)}=l+\frac52 ,
\end{equation}
an integer-spaced tower. On the cone over $\CP^3$, $\nu_p$ is not linear in $p$, so the
tower is not integer-spaced, and since $p(p+3)<p(p+5)$ for $p\geq1$ it lies below the
flat values. The two towers \emph{coincide} at $p=0$
(both have $\nu=5/2$) and separate for every $p\geq1$, with deficits
$(p+\frac52)-\nu_p=0.298,\,0.469,\,0.576$ for $p=1,2,3$. The $s$-wave sector is
therefore blind to the difference, and a test restricted to $p=0$ would conclude
that the onion approximates flat $\mathbb{R}^7$. The difference between the
angular spectra, rather than any detail of the discretization, determines which
continuum space the onion approximates.

\paragraph{Exact Schr\"odinger form}
Since $(N_{j+1}+N_{j-1})/N_j=2+6/[j(j+2)]$ exactly, the chain~\eqref{eq:tridiag} is, in the
variable $y_j:=\lambda(j+1)$, precisely the standard three-point discretization of step
$\lambda$ of a Schr\"odinger operator, with $u_j:=N_jc_j$ the symmetrized coefficients,
\begin{equation}\label{eq:schrform}
\bigl(-\Delta^{(p)}u\bigr)_j=\frac{2u_j-u_{j+1}-u_{j-1}}{\lambda^2}+V(y_j)\,u_j,
\quad
V(y)=\frac{6}{y^2-\lambda^2}+\frac{p(p+3)}{r^2},\quad
r^2=y^2\Bigl(1-\frac{\lambda^2}{y^2}\Bigr)^{2/3},
\end{equation}
with the two ends of the chain given by the ghosts $u_p=0$ and $u_{M+1}=(1+3/M)\,u_M$
of~\eqref{eq:tridiag}. Expanding in $\lambda$,
\begin{equation}\label{eq:Vexp}
V(y)=\frac{\nu_p^2-\frac14}{y^2}+\lambda^2\,\frac{c_p}{y^4}+O\Bigl(\frac{\lambda^4}{y^6}\Bigr),
\qquad \nu_p^2-\tfrac14=p(p+3)+6,\qquad c_p=6+\tfrac23\,p(p+3),
\end{equation}
with $\nu_p$ as in~\eqref{eq:nup}. The leading term is exactly the Schr\"odinger form of
$L_p$ in~\eqref{eq:Lp}, obtained from $f=u/r^{3}$: the cone is not fitted but read off the
construction, since flat $\mathbb{R}^7$ would require the coefficient $p(p+5)+6$
by~\eqref{eq:nuflat}. The scale of the link, $\tilde g=4g_{\rm FS}$, is a convention
set by the radius assignment of Section~\ref{sec:radius} (other scales are treated in
Section~\ref{sec:linkscale}). The correction in~\eqref{eq:Vexp} is relatively of size
$\lambda^2c_p/[y^2(\nu_p^2-\frac14)]\leq\lambda^2/y^2$, uniformly in $p$; the ratio
$c_p/(\nu_p^2-\frac14)$ decreases from $1$ at $p=0$ to $\frac23$ as $p\to\infty$. This
bounds the error of the potential, not that of the three-point difference: the lowest mode of
sector $p$ has $kR\lesssim\nu_p$, so the lattice resolution $k\lambda\sim p/M$ is small only for
$p\ll M$. The expansion parameter is
$\lambda^2/y^2$, so~\eqref{eq:Vexp} says nothing about the innermost few layers.

\subsection{Mode-by-mode agreement and rate of convergence}
\label{sec:modes}
\label{sec:rate}

The continuum eigenvalues are $E^{\rm cont}_n=(x^{\rm cont}_n/R)^2$, where
$x^{\rm cont}_n$ is the $n$-th positive root of the Neumann
condition~\eqref{eq:neumann}. On the discrete side we set
$k^{\rm disc}_n:=\sqrt{E^{\rm disc}_n}$ and $x^{\rm disc}_n:=k^{\rm disc}_nR$ from the
eigenvalues $E^{\rm disc}_n$ of~\eqref{eq:tridiag}, and compare $x^{\rm disc}_n$ with
$x^{\rm cont}_n$ (a relative error $\delta$ in $x$ is a relative error $2\delta$
in $E$). The constant mode of $p=0$ ($x=0$, with eigenvalue $0$ also on the
discrete side) agrees trivially and is excluded, so $n=1$ is the first non-zero
mode; this is why $x^{\rm cont}_1$ is larger at $p=0$ in Table~\ref{tab:modes}.

\paragraph{How the limit is taken}
Since $E^{\rm disc}_n\propto\lambda^{-2}$ and $R\propto\lambda$, $x^{\rm disc}_n$
depends only on $M$, so $M\to\infty$ at $\lambda=1$ is the same as $\lambda\to0$
at fixed $R$.

Table~\ref{tab:modes} compares the two at $M=20480$ ($\lambda=1$, $R=20481.5$). The
lowest mode agrees to $1.1\times10^{-9}$ in each sector shown, and the sixth mode to
$\sim3\times10^{-8}$. No parameter is fitted: $\lambda$, the radius assignment and $R$
are fixed by Sections~\ref{sec:radius} and~\ref{sec:KR}. The comparison thus tests
not whether some discrete Laplacian converges to some continuum Laplacian, but
whether \emph{this} operator, with the bond weights and radii dictated by the
construction, converges to \emph{this} geometry.

\begin{table}[htbp]
\centering
\footnotesize
\setlength{\tabcolsep}{3.5pt}
\caption{Radial modes of the fuzzy $\CP^3$ onion at $M=20480$ compared with the
Neumann modes~\eqref{eq:neumann} of the cone $C(\CP^3)$, with $R=r_M+\lambda/2$.
The relative error is $(x^{\rm disc}_n-x^{\rm cont}_n)/x^{\rm cont}_n$.}
\label{tab:modes}
\begin{tabular}{ccccccccc}
\toprule
& \multicolumn{2}{c}{$p=0$, $\nu=2.5000$} & \multicolumn{2}{c}{$p=1$, $\nu=3.2016$}
& \multicolumn{2}{c}{$p=2$, $\nu=4.0311$} & \multicolumn{2}{c}{$p=3$, $\nu=4.9244$}\\
\cmidrule(lr){2-3}\cmidrule(lr){4-5}\cmidrule(lr){6-7}\cmidrule(lr){8-9}
$n$ & $x^{\rm cont}_n$ & rel.\ error & $x^{\rm cont}_n$ & rel.\ error
& $x^{\rm cont}_n$ & rel.\ error & $x^{\rm cont}_n$ & rel.\ error\\
\midrule
1 & 6.98793 & $+1.1\times10^{-9}$ & 2.34666 & $+1.0\times10^{-9}$ & 3.67824 & $+1.1\times10^{-9}$ & 4.89486 & $+1.1\times10^{-9}$\\
2 & 10.41712 & $-1.7\times10^{-9}$ & 7.92078 & $+7.8\times10^{-11}$ & 8.99277 & $-5.3\times10^{-10}$ & 10.11913 & $-9.4\times10^{-10}$\\
3 & 13.69802 & $-6.5\times10^{-9}$ & 11.38643 & $-3.3\times10^{-9}$ & 12.50540 & $-4.3\times10^{-9}$ & 13.68419 & $-5.1\times10^{-9}$\\
4 & 16.92362 & $-1.3\times10^{-8}$ & 14.69062 & $-8.7\times10^{-9}$ & 15.84021 & $-1.0\times10^{-8}$ & 17.05398 & $-1.1\times10^{-8}$\\
5 & 20.12181 & $-2.2\times10^{-8}$ & 17.93245 & $-1.6\times10^{-8}$ & 19.10373 & $-1.8\times10^{-8}$ & 20.34274 & $-1.9\times10^{-8}$\\
6 & 23.30425 & $-3.3\times10^{-8}$ & 21.14264 & $-2.5\times10^{-8}$ & 22.33014 & $-2.8\times10^{-8}$ & 23.58830 & $-2.9\times10^{-8}$\\
\bottomrule
\end{tabular}
\end{table}

Fitting the relative errors for $M=1280,\dots,20480$ to $M^{-q}$ gives
$q=1.998$--$1.999$ for all $p\leq8$ and $n\leq4$ ($q=2.000$ between $M=10240$ and
$20480$). Figure~\ref{fig:modes}(a) shows the case $p=0$: the error decreases by a
factor of four per doubling of $M$. In each angular sector the onion Laplacian is
thus a second-order discretization of the Laplacian on the cone, and the angular
cutoff and the truncation $j\geq p+1$ do not degrade the low sectors. Second order
and the weak $p$-dependence of the error coefficient reflect that the correction
in~\eqref{eq:Vexp} starts at order $\lambda^2$ with a relative size uniform in $p$
(Section~\ref{sec:candidates}). The convergence is not uniform in $p$, however: at
$M=20480$ the relative error of the lowest mode is $1.5\times10^{-8}$ even at
$p/M=0.75$, but $2\times10^{-3}$ for a chain of twenty layers ($p=20460$) and $0.41$
for a single layer ($p=20479$), and at fixed $p/M=\frac12$ the effective order drops
to $q\simeq1.3$--$1.4$. As for any second-order scheme, only a fixed fraction of the
spectrum is reproduced to a given accuracy (about $15\%$ to $10^{-2}$ at $p=0$, for
all $20\leq M\leq20480$), so observables dominated by the upper end of the radial
band reflect the lattice rather than the geometry.

\paragraph{Construction choices and the order of convergence}
The choices of Section~\ref{sec:onion} change the order of convergence even where
they leave the limit unchanged. With the completely positive map $U^{\rm pos}$
of~\eqref{eq:Upos} the limit is the same cone but the convergence drops to first
order ($6.4\times10^{-6}$ versus $1.0\times10^{-9}$ for $\Util$ in the lowest mode
of $p=1$ at $M=20480$), and endpoint weights $N_j^2$ or $N_{j+1}^2$ keep the
effective dimension $7$ but are likewise first order. The arithmetic midpoint
(Section~\ref{sec:KR}) keeps second order with an error coefficient about $13$ times
larger at $p=0$. Within the present construction, which writes the radial energy through
interlayer differences~\eqref{eq:EU} and adds the angular Laplacian~\eqref{eq:KL}
separately, the choice of $\Util$, i.e.\ of identifying modes by their
expansion coefficients as in~\cite{KT}, is thus singled out by second-order
convergence, not by the continuum limit.

\begin{figure}[t]
\centering
\includegraphics[width=\textwidth]{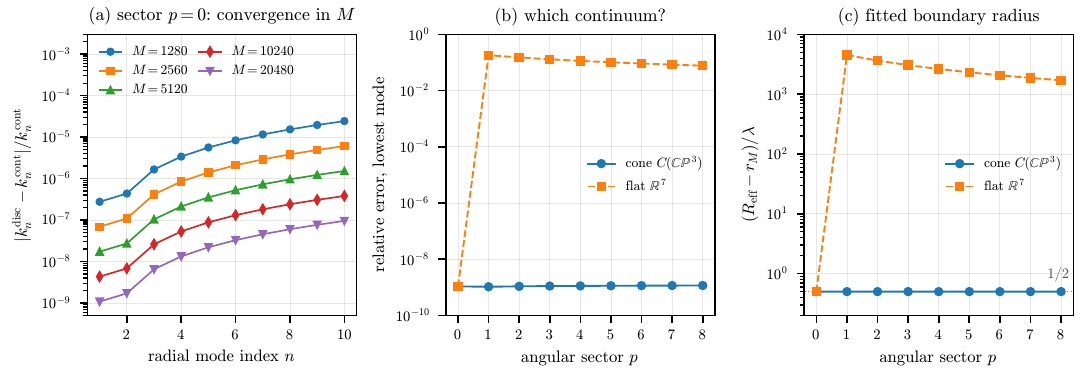}
\caption{Continuum limit of the fuzzy $\CP^3$ onion Laplacian.
(a) Relative deviation of the discrete radial wavenumbers $k^{\rm disc}_n$ from
the continuum values $k^{\rm cont}_n$ in sector $p=0$ ($M=1280,2560,5120,10240,20480$);
the error decreases by a factor of four per doubling of $M$.
(b) Relative error of the lowest radial mode at $M=20480$ under two continuum
hypotheses: the metric cone $C(\CP^3)$, with $\nu_p=\sqrt{(p+3/2)^2+4}$, and
flat $\mathbb{R}^7$, with $\nu=p+5/2$. The two coincide at $p=0$ and separate
for every $p\geq1$.
(c) Effective radius $R_{\rm eff}=x^{\rm cont}/k^{\rm disc}$ (lowest mode,
$M=20480$) when the outer boundary radius is treated as a free parameter. The cone
is fitted in all sectors by a single radius $R_{\rm eff}=r_M+\lambda/2$ (spread
$2.3\times10^{-6}\lambda$), whereas flat $\mathbb{R}^7$ requires a different
radius in each sector, with a spread of $4.5\times10^{3}\lambda$ ($22\%$ of the radius).
In (a) and (b) the outer radius is fixed at $R=r_M+\lambda/2$. To avoid roundoff
at large $M$, the eigenvalues are computed by shooting on the difference form of
the symmetrized chain.}
\label{fig:modes}
\end{figure}

\subsection{Discrimination from flat space and the outer radius}
\label{sec:discriminate}
\label{sec:robust}

The cone $C(\CP^3)$ and flat $\mathbb{R}^7$ share $\nu=5/2$ at $p=0$, so only the
sectors $p\geq1$ can discriminate between them. The only parameter in the
comparison is the outer radius $R$, which does not appear in the discrete
operator~\eqref{eq:tridiag} and enters only as the position of the continuum
boundary.

\paragraph{Fixed $R$}
For $R=r_M+\lambda/2$ the relative error of the lowest mode at $M=20480$ is
$(1.05$--$1.16)\times10^{-9}$ for the cone throughout $p=0,\dots,8$, and
$7.7\times10^{-2}$--$1.8\times10^{-1}$ for flat $\mathbb{R}^7$ at $p\geq1$
(Figure~\ref{fig:modes}b). The ratio reaches eight orders of magnitude with no free
parameter and grows as $M^2$. Shifting $R$ by half a cell, to $r_M$ or $r_{M+1}$,
raises the error relative to the cone to $\lambda/(2R)=2.4\times10^{-5}$ and makes
the convergence first order, but it remains more than three orders of magnitude
below the error relative to flat space.

\paragraph{$R$ determined from the data}
For each mode we determine the radius $R_{\rm eff}:=x^{\rm cont}_n/k^{\rm disc}_n$ at
which the discrete value and the continuum root agree (Figure~\ref{fig:modes}c); for
the correct candidate every mode in every sector should give the same $R_{\rm eff}$.
For the cone, all sectors $p=0,\dots,8$ give $(R_{\rm eff}-r_M)/\lambda=0.49998$ at
$M=20480$ (spread $2.3\times10^{-6}$), approaching $\frac12$ with deviations of
$O(1/M)$. For flat $\mathbb{R}^7$, $R_{\rm eff}$ is spread over $4.5\times10^{3}\lambda$
($22\%$ of the radius). The value $R=r_M+\lambda/2$ is thus selected by the spectrum,
not by us. (For the half-cell convention of Section~\ref{sec:KR} the same
procedure gives $R_{\rm eff}\to r_M$, with the same identification of the cone.)

\paragraph{Scale of the link}
\label{sec:linkscale}
The scale $\tilde g=4g_{\rm FS}$ follows from $r_j\simeq\lambda(j+1)$ with step
$\lambda$, i.e.\ from the Weyl rule of Section~\ref{sec:radius}; the matrix model
suggests a different ratio. The stacked configuration $X^A=\mu\bigoplus_k\rho_k(T^A)$
solves the Yang--Mills matrix model with a Myers term~\cite{Myers,Kitazawa}; its
layers lie in $\mathbb{R}^{15}$ at radii $\varrho_k=\mu\sqrt{3k(k+4)/8}$, a distance
$\mu\sqrt{3/8}$ apart for large $k$, with induced metric $\frac83\varrho_k^2g_{\rm FS}$.
String modes $|z\rangle\langle w|$~\cite{IKK,SteinackerString,SteinackerTekel}
between two layers satisfy $\langle\Box\rangle=|x(z)-x(w)|^2$ for
$\Box=\sum_A[X^A,[X^A,\cdot\,]]$ and $x(z)=\langle z|X|z\rangle$, up to the
zero-point term of~\cite{SteinackerString}, so this embedding
distance is the one the matrix model measures. Taking the radial step equal to the
layer spacing replaces $\hat C_2$ in $K_L$ by $\frac32\hat C_2$ and gives the cone
with link $\frac83g_{\rm FS}$ and $\nu_p^2=\frac{25}{4}+\frac32p(p+3)$. Since the
embedding is minimal, $p=1$ then also coincides with flat $\mathbb{R}^7$
($\nu_1=\frac72$), and the hypotheses separate only from $p=2$ on. The tests carry
over: at $M=20480$ the lowest mode agrees with this cone to
$(1.06$--$1.17)\times10^{-9}$ for $p\leq8$ with $q=2.00$, flat space misses by
$3.4\times10^{-2}$ at $p=2$ and $1.2\times10^{-1}$ at $p=8$, and $R_{\rm eff}$ is
again $0.49998$ for the cone but spread over $11\%$ of the radius for flat space.
The identification of a cone over $\CP^3$ rather than flat space is therefore
independent of the link scale; only the scale itself and the first discriminating
sector depend on it. The matrix model does not fix this scale, since it produces no
radial kinetic term, and we keep $\tilde g=4g_{\rm FS}$.

\subsection{The free propagator}
\label{sec:propagator}

The spectrum determines the free field theory defined by~\eqref{eq:action}. We take
$a=\frac12$, $b=\frac12m^2$ with $m>0$ and $c=0$. Here $m$ is the mass and $H^{(p)}$ is the symmetric tridiagonal matrix of~\eqref{eq:tridiag}. The mass is needed: for $m=0$ the
Neumann chain of the sector $p=0$ has the constant zero mode ($u_j\propto N_j$,
section~\ref{sec:rate}), so that $H^{(0)}$ is not invertible, and a massless
propagator would require projecting out this mode. We expand
$\Phi^{(j)}=\sum_{p,m}c^{(j)}_{pm}\hat Y^{(j)}_{pm}$ in a real basis normalized by the
integration rule~\eqref{eq:intrule}. In this subsection $k$ denotes a second layer index.
By~\eqref{eq:radint} the action becomes, with $c_j:=c^{(j)}_{pm}$,
$\frac12\VolCP\sum_{p,m}\sum_j\lambda r_j^6\,c_j\,[(-\Delta^{(p)}+m^2)c]_j$, the
discretization of $\frac12\VolCP\int r^6dr\,c\,(-L_p+m^2)c$, so that
$\langle c^{(j)}_{pm}c^{(k)}_{p'm'}\rangle=\delta_{pp'}\delta_{mm'}\,G^{(p)}_{jk}/\VolCP$. Here $G^{(p)}_{jk}$ is the free propagator between layers $j,k$ in angular sector $p$, and
\begin{equation}\label{eq:Gdisc}
G^{(p)}_{jk}=\frac{\bigl[(H^{(p)}+m^2)^{-1}\bigr]_{jk}}{\lambda\,r_j^3\,r_k^3}.
\end{equation}
The continuum
counterpart is the Green function of $-L_p+m^2$ with respect to $r^6dr$, regular at the
apex and satisfying the Neumann condition at $R$,
\begin{equation}\label{eq:Gcone}
\begin{gathered}
G_p(r,r')=(r_<r_>)^{-5/2}\,I_{\nu_p}(mr_<)\bigl[K_{\nu_p}(mr_>)+\kappa_p\,I_{\nu_p}(mr_>)\bigr],\\
\kappa_p=-\frac{xK_{\nu_p}'(x)-\frac52K_{\nu_p}(x)}{xI_{\nu_p}'(x)-\frac52I_{\nu_p}(x)}\bigg|_{x=mR},
\end{gathered}
\end{equation}
with $r_<=\min(r,r')$ and $r_>=\max(r,r')$, where $I_\nu$ and $K_\nu$ are the modified
Bessel functions of the first and second kind; its normalization follows from the Wronskian
of $I_\nu$ and $K_\nu$. Flat $\mathbb{R}^7$ corresponds, sector by sector, to
$\nu=p+\frac52$. In position space the addition theorem on $\CP^3$ gives
\begin{equation}\label{eq:Gpos}
\langle\phi(x)\phi(y)\rangle=\frac{1}{\VolCP}\sum_p d_p\,Z_p(\vartheta)\,G_p(r,r'),
\qquad
Z_p(\vartheta)=\frac{P^{(2,0)}_p(\cos\vartheta)}{P^{(2,0)}_p(1)},
\end{equation}
where $\vartheta$ is the geodesic distance on $\CP^3$ in the metric $\tilde g=4g_{\rm FS}$
$P^{(2,0)}_p$ is a Jacobi polynomial, and $Z_p$ is the zonal function of sector $p$
normalized to $Z_p(0)=1$. On the onion we represent $\Phi^{(j)}$ by the
function with the same expansion coefficients, the symmetric symbol of~\cite{DHMO}; the
two-point function is then~\eqref{eq:Gpos} with $G^{(p)}_{jk}$ in place of $G_p$ and the
sum truncated at $p\leq\min(j,k)-1$. We study convergence in this representation.
If the field is instead represented by its coherent-state symbol
$\langle z|\Phi^{(j)}|z\rangle$, the sector-$p$ term acquires the factor
$[T_{j-1}(p)\,T_{k-1}(p)]^{1/2}$ of~\eqref{eq:UposT}, where
$T_L(p)=1-p(p+3)/L+O(L^{-2})$. This is a first-order finite-layer correction of
relative size $\simeq\frac12p(p+3)\lambda(1/r+1/r')$ (the factor is $0.999512$ for
$p=1$ between the layers of Table~\ref{tab:prop}); it leaves the continuum limit
unchanged but reduces the convergence to first order.

The low-energy propagator of the onion is that of the cone. In Table~\ref{tab:prop} ($M=20480$) the
deviation from~\eqref{eq:Gcone} does not exceed $2.7\times10^{-8}$, whereas flat $\mathbb{R}^7$, which
agrees at $p=0$, differs by $21$--$37\%$ for $p=1,2,3$; in Figure~\ref{fig:prop}(a) ($M=20480$) the onion
points lie on the solid cone curves and are clearly separated from the dashed flat ones.
The deviation decreases as $M^{-2}$ in every sector (Figure~\ref{fig:prop}b; $q=2.00$
between $M=10240$ and $20480$, where it is below $2.2\times10^{-8}$ for $p\leq3$); already
at $M=40$ it is below $5\times10^{-3}$, far smaller than the difference from flat space; and
the position-space propagator~\eqref{eq:Gpos} of the symmetric symbols at
$\vartheta=0$, $\pi/2$ and $\pi$ converges at the same order from $M=160$ on, once the angular cutoff exceeds the
sectors that contribute. The layers are $j=\mathrm{round}(0.3M)$ and
$k=\mathrm{round}(0.6M)$, with the continuum functions evaluated at $r_j$ and $r_k$. The
agreement is a low-energy statement: at $M=20480$ the deviation in sector $p=1$ is
$3\times10^{-9}$ for $mR=5$ ($m\lambda=2.4\times10^{-4}$), but $2\times10^{-7}$ at
$m\lambda=10^{-3}$, $1.5\times10^{-5}$ at $4\times10^{-3}$, $1\times10^{-3}$ at $0.016$ and
$6\times10^{-2}$ at $0.06$. The propagator from the innermost layer,
finally, approaches the propagator from the apex, $\lim_{r\to0}G_0(r,r')$, with a
deviation of $1.4\times10^{-7}$ at $M=20480$ and an effective order that increases from
$1.7$ to $1.9$ over $M=80$--$20480$: the innermost layer takes the place of the apex also
for propagation.

\begin{table}[t]
\centering
\small
\caption{Sector propagators at $M=20480$, $\lambda=1$, $mR=5$, between the layers
$j=6144$ and $k=12288$ ($r\simeq0.3R$, $r'\simeq0.6R$), and on the layer $k$. Relative
deviations are taken with respect to the cone~\eqref{eq:Gcone} evaluated at $r_j$,
$r_k$; the last column compares the flat value $\nu=p+\frac52$ with the cone.}
\label{tab:prop}
\begin{tabular}{ccccc}
\toprule
$p$ & $R^5G_p(r_j,r_k)$ (cone) & onion, $(j,k)$ & onion, $(k,k)$ & flat $\mathbb{R}^7$ \\
\midrule
0 & 1.0779 & $-1.0\times10^{-9}$ & $-1.2\times10^{-8}$ & $0$ \\
1 & 0.6239 & $+3.2\times10^{-9}$ & $-1.6\times10^{-8}$ & $-0.211$ \\
2 & 0.3225 & $+1.0\times10^{-8}$ & $-2.1\times10^{-8}$ & $-0.313$ \\
3 & 0.1579 & $+2.2\times10^{-8}$ & $-2.7\times10^{-8}$ & $-0.368$ \\
\bottomrule
\end{tabular}
\end{table}

\begin{figure}[t]
\centering
\includegraphics[width=\textwidth]{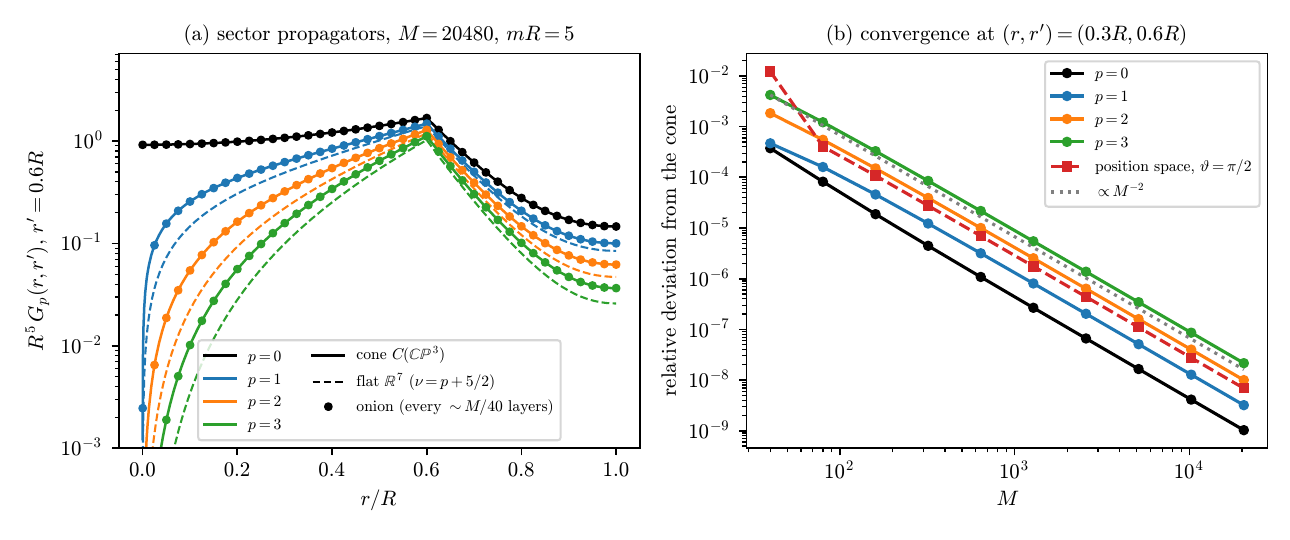}
\caption{Free propagator of the fuzzy $\CP^3$ onion ($\lambda=1$, $mR=5$).
(a) Sector propagators $G^{(p)}_{jk}$ from the layer $k$ at $r'\simeq0.6R$ for $M=20480$
(dots, shown every $\sim M/40$ layers), compared with the cone~\eqref{eq:Gcone} (solid) and with flat $\mathbb{R}^7$,
$\nu=p+\frac52$ (dashed; identical to the cone for $p=0$).
(b) Relative deviation from the cone at $(r,r')\simeq(0.3R,0.6R)$ for the sectors
$p=0,\dots,3$ and for the position-space propagator~\eqref{eq:Gpos} at $\vartheta=\pi/2$.
At $M=40$ the latter is dominated by the angular cutoff $p\leq j-1=11$. To avoid
roundoff at large $M$, the propagator is built from the two homogeneous solutions
propagated in difference form.}
\label{fig:prop}
\end{figure}

\FloatBarrier
\paragraph{Summary of this section}
\label{sec:contsummary}

Three conclusions follow. First, the fuzzy $\CP^3$ onion is seven-dimensional but
is not a discretization of flat seven-dimensional space: its continuum limit is the
cone over $\CP^3$, with link metric $4g_{\rm FS}$ for the radius convention of
Section~\ref{sec:radius}, and both the spectrum and the free propagator converge to
it at second order. Second, the cone is singular at its apex. A metric cone is
smooth there only if its link is the round unit sphere, and the cone over
$\CP^3$ with $\tilde g=4g_{\rm FS}$ has scalar curvature $-18/r^2$. The onion itself
never reaches the apex: its innermost layer is a single $1\times1$ block at
$r_1=6^{1/3}\lambda$, and the chain of sector $p$ begins at $r_{p+1}\simeq\lambda(p+2)$.
Nor does the free spectrum depend on how the apex is treated: because of the
centrifugal term $(\nu_p^2-\frac14)/r^2\geq6/r^2$ the endpoint $r=0$ is in the
limit-point case (the threshold being $\nu_p\geq1$), so the Laplacian of the cone is
self-adjoint without any boundary condition at the apex, and it is to its Bessel
spectrum that the onion converges. Third, the identification concerns the free
Laplacian and does not by itself imply that interacting theories on the onion have
$C(\CP^3)$ as their background.

\FloatBarrier
\section{Heat diffusion on the onion}
\label{sec:heat}

Section~\ref{sec:continuum} identified the geometry from the spectrum and the free
propagator. We now
examine the dynamics of classical fields on the onion. Heat diffusion adds no
information beyond the measure $h_j=N_j^2$ and the spectrum of $\Delta$, but it
expresses both in thermodynamic terms and, for localized initial conditions, makes
the angular cutoff directly visible (Section~\ref{sec:localized}).

\subsection{Heat equation and initial conditions}
\label{sec:heatsetup}

The heat equation (cf.~\cite{KT}) is
\begin{equation}\label{eq:heateq}
\Delta\Psi(t)=\alpha\,\partial_t\Psi(t),\qquad \Delta=K_R-K_L ,
\end{equation}
with $\alpha>0$ a constant setting the time scale.
By~\eqref{eq:KRpsi} and~\eqref{eq:KL}, it becomes an explicit evolution equation
for block-diagonal matrices,
\begin{align}\label{eq:heatexplicit}
\alpha\,\frac{\partial\Phi^{(j)}}{\partial t}
&=\frac{1}{\lambda^{2}}\left[
\frac{N_{j+1}}{N_j}D_j\Phi^{(j+1)}
+\frac{N_{j-1}}{N_j}\Util_{j-1}\Phi^{(j-1)}
-\frac{N_{j+1}+N_{j-1}}{N_j}\Phi^{(j)}\right]\nonumber\\
&\quad-\frac{1}{r_j^{2}}\sum_{A=1}^{15}\bigl[\rho(T^A),[\rho(T^A),\Phi^{(j)}]\bigr],
\end{align}
where the bracket describes radial heat conduction between neighbouring layers
and the last term angular diffusion within a layer. No basis appears. The
conserved heat is
\begin{equation}\label{eq:Qmat}
Q=\sum_{j=1}^{M}N_j\,\tr\,\Phi^{(j)}=\frac{1}{\mathcal V}\int d^7x\,\phi ,
\end{equation}
since $\tr K_L\Phi=0$ and, by self-adjointness and the zero mode,
$\langle\id,K_R\Psi\rangle_G=\langle K_R\id,\Psi\rangle_G=0$. In the $p=0$ sector
$\Phi^{(j)}=c_j(t)\,\id$, and we call $c_j$ the temperature of layer $j$; then
$Q=\sum_jN_j^2c_j$.

We take $M=12$, $\lambda=1$, $\alpha=1$ and the operator of Section~\ref{sec:KR}
(ghost term at the inner end of each chain, no outward bond at $j=M$). The initial
condition heats only the outermost layer, $c_{12}(0)=1$ and $c_j(0)=0$ otherwise.
To exhibit the angular sectors, the outermost amplitudes of the sectors
$p=1,2,3$ are also excited separately. The evolution is computed exactly by
eigendecomposition. The results are shown in Figure~\ref{fig:heat}.

\begin{figure}[t]
\centering
\includegraphics[width=\textwidth]{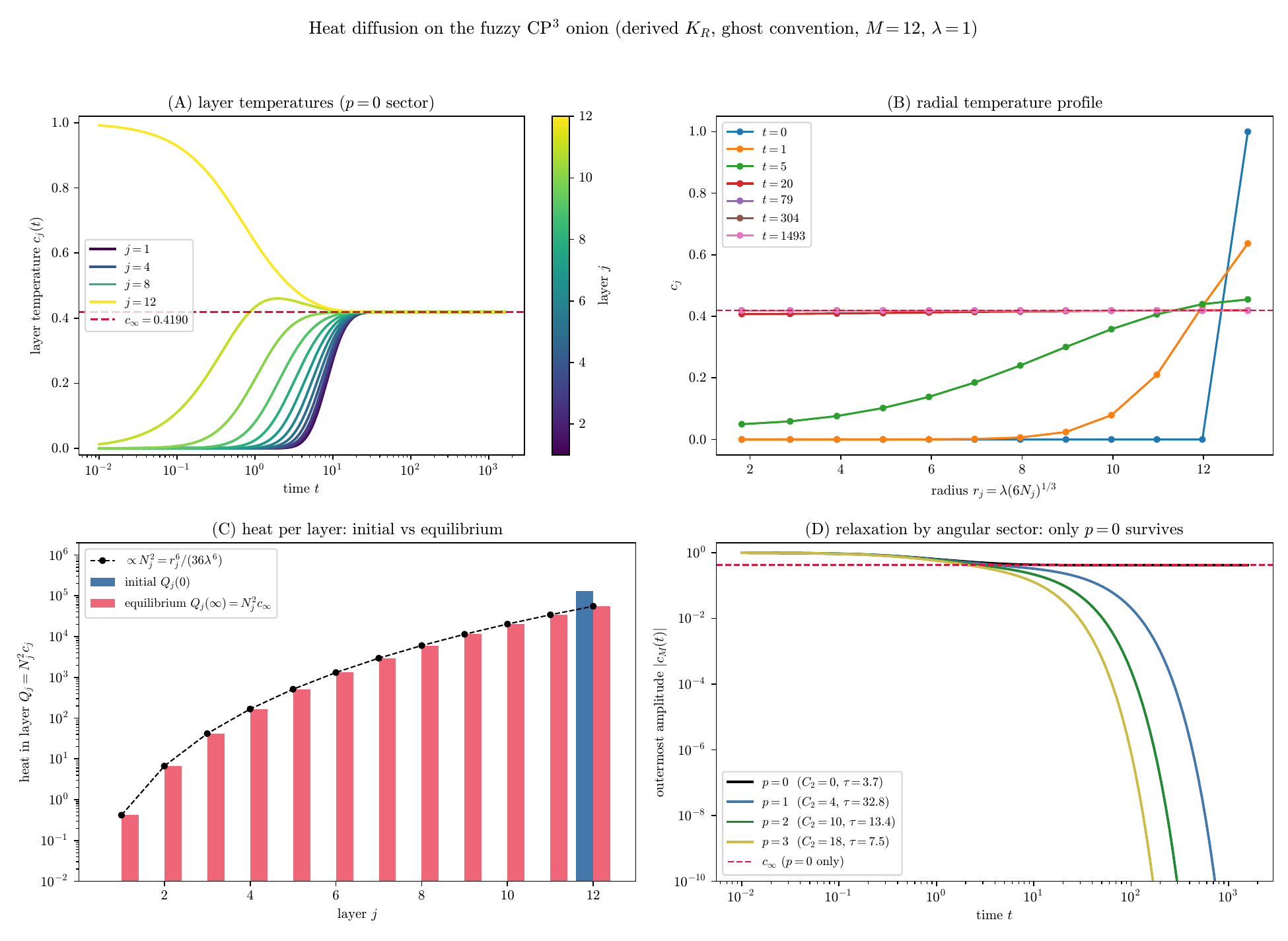}
\caption{Heat diffusion on the fuzzy $\CP^3$ onion ($M=12$, $\lambda=1$,
$\alpha=1$). (A) Temperature of each layer in sector $p=0$. (B) Radial
temperature profile. (C) Heat content of each layer, initially and at
equilibrium; the dashed line is $\propto N_j^2=r_j^6/(36\lambda^6)$.
(D) Relaxation of the outermost amplitudes in sectors $p=0,1,2,3$.}
\label{fig:heat}
\end{figure}

\subsection{Equilibrium plateau and relaxation times}
\label{sec:plateau}

All layers approach a common temperature (Figure~\ref{fig:heat}A), whose value
follows from the conservation law:
\begin{equation}\label{eq:cinf}
c_\infty=\frac{\sum_j N_j^2\,c_j(0)}{\sum_j N_j^2}
\;\overset{c_M(0)=1}{=}\;\frac{N_M^2}{\sum_{j=1}^{M}N_j^2}
=\frac{\dim\End(V_{M-1})}{\dim\{\Psi\}} .
\end{equation}
The equilibrium temperature equals the number of modes of the outermost layer
divided by the total number of degrees of freedom of the onion; for $M=12$,
$c_\infty=132496/316251=0.41896$. The plateau is fixed by the conservation law
and therefore by the measure alone; it restates the choice $h_j=N_j^2$ rather than
testing it.\footnote{With
the half-cell convention of Section~\ref{sec:KR}, $W_M=N_M^2/2$ for the outermost
layer, the total weight $\sum_jW_j$ changes by a
relative amount close to $D/(2M)$: $0.265$, $0.071$, $0.0181$, $0.0046$ for
$M=12,48,192,768$, compared with $7/(2M)=0.292$, $0.073$, $0.0182$, $0.0046$. The
two conventions thus differ by a boundary effect of order $\lambda/R$ rather than
a bulk effect of order $\lambda^2$. For $c_M(0)=1$ the deposited heat itself
differs, giving $c_\infty=0.4190$ (derived form) versus $0.2650$ (half cell), and
the slowest relaxation time of sector $p=0$ differs by $6.5\%$ at $M=12$.} At equilibrium the temperature is
uniform, but the heat content per layer, $Q_j=N_j\tr\Phi^{(j)}$, grows as
$Q_j\propto N_j^2\propto r_j^6$
(Figure~\ref{fig:heat}C), the volume of a seven-dimensional shell. The layer
$j=11$ adjacent to the source overshoots to $0.46$ at $t\simeq2$ before relaxing,
as expected for local diffusion.

Since $N_j\simeq j^3/6$, one has $\sum_jN_j^2\simeq M^7/252$ and
$N_M^2\simeq M^6/36$, hence, with $D$ the spatial dimension,
\begin{equation}\label{eq:Mcinf}
M\,c_\infty\;\longrightarrow\;7=D .
\end{equation}
\begin{table}[htbp]
\centering
\small
\caption{Equilibrium plateau multiplied by the number of layers, $Mc_\infty$, for the initial condition $c_M(0)=1$.}
\label{tab:Mcinf}
\begin{tabular}{rccccccc}
\toprule
$M$ & 320 & 640 & 1280 & 2560 & 5120 & 10240 & 20480 \\
\midrule
$Mc_\infty$ & 6.903 & 6.951 & 6.975 & 6.988 & 6.994 & 6.997 & 6.998 \\
\bottomrule
\end{tabular}
\end{table}
As Table~\ref{tab:Mcinf} shows, the deficit $7-Mc_\infty$ halves with each doubling of $M$, so that
$Mc_\infty=7-O(1/M)$. Equivalently, the equilibrium temperature per unit heat
scales as $R^{-D}$, with local exponent $-6.9998$ between $M=10240$ and $M=20480$. The
plateau thus expresses the $r^6$ growth of the measure as a thermodynamic
observable.

\paragraph{Relaxation times}
\label{sec:relax}

The slowest decay rate in sector $p$ is the smallest non-zero eigenvalue of
$-\Delta^{(p)}$ (with $\alpha=1$), and its inverse is the relaxation time $\tau_p$;
this restates the spectrum of Section~\ref{sec:continuum} at small $M$. The
cone predicts the decay rate to be $(x_1/R)^2$, with $x_1$ the first root
of~\eqref{eq:neumann} for $\nu=\nu_p$; Table~\ref{tab:relax} compares the two.
\begin{table}[htbp]
\centering
\small
\caption{Slowest relaxation times $\tau_p$ of the angular sectors for $M=12$, compared with the cone prediction $(R/x_1)^2$ with $R=r_{12}+\lambda/2=13.474$.}
\label{tab:relax}
\begin{tabular}{cccccc}
\toprule
$p$ & $C_2(p)$ & $x_1$ & $\tau_p$ (onion, $M=12$) & $(R/x_1)^2$ (cone) & rel.\ difference \\
\midrule
0 & 0 & 6.9879 & 3.701 & 3.718 & $-4.5\times10^{-3}$ \\
1 & 4 & 2.3467 & 32.812 & 32.969 & $-4.8\times10^{-3}$ \\
2 & 10 & 3.6782 & 13.352 & 13.419 & $-5.0\times10^{-3}$ \\
3 & 18 & 4.8949 & 7.538 & 7.578 & $-5.2\times10^{-3}$ \\
\bottomrule
\end{tabular}
\end{table}
Even for an onion of twelve layers, the cone
predicts the relaxation times to $0.5\%$. The ordering is not monotonic in $p$:
the dipolar sector $p=1$ relaxes about nine times more slowly than the lowest
non-zero mode of $p=0$, because for $\nu_1=3.2016$ the Neumann condition has a
small first root. The angular patterns with $p>0$ decay completely, leaving only
the uniform component (Figure~\ref{fig:heat}D).

\subsection{Localized initial conditions and the angular cutoff}
\label{sec:localized}

The initial conditions above excite a single sector. Figure~2 of~\cite{KT}
instead excites a single matrix element of the outermost layer, i.e.\ a hot spot
near the north pole. Its $\CP^3$ analogue is the projector onto a coherent
state~\eqref{eq:zstate} (with $k=j-1$ on layer $j$), localized at a point. Since all points of $\CP^3$ are equivalent, we take $z_0=(1,0,0,0)$ in
place of the north pole, call it the pole, and solve
\begin{equation}\label{eq:pole}
\Phi^{(M)}(0)=|z_0\rangle\langle z_0|,\qquad
|z_0\rangle=\frac{(a_1^\dagger)^{M-1}}{\sqrt{(M-1)!}}|0\rangle,
\qquad \Phi^{(j)}(0)=0\ \ (j<M),
\end{equation}
with the parameters of~\cite{KT}, $M=5$, $\lambda=1$, $\alpha=1$, exactly by
eigendecomposition. The initial condition is invariant under the transformations
that fix $z_0$, i.e.\ phase rotations of $a_1$ and unitary transformations of
$(a_2,a_3,a_4)$; since $\Delta$ commutes with $SU(4)$, so is $\Phi^{(j)}(t)$ at all
times. Hence $\Phi^{(j)}(t)$ is diagonal in the occupation-number basis
$|n_1,\dots,n_4\rangle$ ($n_i$ the occupation number of $a_i$), with diagonal entries
$\Phi^{(j)}_{n_1}(t)$ depending on $n_1$ only.

A single matrix element contains all sectors (Table~\ref{tab:polesectors}).
\begin{table}[htbp]
\centering
\small
\caption{Sector content of the localized initial condition~\eqref{eq:pole} for $M=5$: dimension $d_p$, layers on which sector $p$ exists, and fraction of $\|\Phi^{(M)}\|^2$.}
\label{tab:polesectors}
\begin{tabular}{cccc}
\toprule
$p$ & $d_p$ & Layers on which it exists & Fraction of $\|\Phi^{(M)}\|^2$ \\
\midrule
0 & 1 & $j=1,\dots,5$ & $2.86\%$ \\
1 & 15 & $j=2,\dots,5$ & $21.43\%$ \\
2 & 84 & $j=3,\dots,5$ & $40.00\%$ \\
3 & 300 & $j=4,5$ & $28.57\%$ \\
4 & 825 & $j=5$ & $7.14\%$ \\
\bottomrule
\end{tabular}
\end{table}
Since layer $j$ accommodates only $p\leq j-1$, $97.1\%$ of the hot spot consists
of modes that can never reach the innermost layer; only the $p=0$ component,
$2.86\%=1/N_M$, can. In the single-sector computation of
Section~\ref{sec:plateau} the cutoff never acts ($p\leq3\ll M=12$), whereas here
the finite matrix size shapes the classical diffusion problem directly. Since
$\tr\,Y^{(j)}_{pm}=0$ for $p>0$, only the $p=0$ component carries heat, and
\begin{equation}
Q=N_M=35,\qquad \sum_{j=1}^{5}N_j^2=1742,\qquad c_\infty=\frac{Q}{\sum_jN_j^2}=0.0200918 .
\end{equation}
The innermost layer, a $1\times1$ matrix, approaches $c_\infty$.

The value at the pole, $\langle z_0|\Phi^{(j)}(t)|z_0\rangle=\Phi^{(j)}_{j-1}(t)$, at the
times used in~\cite{KT} is listed in Table~\ref{tab:poletemp}.
\begin{table}[htbp]
\centering
\small
\caption{Value at the pole, $\langle z_0|\Phi^{(j)}(t)|z_0\rangle$, of each layer for the localized initial condition~\eqref{eq:pole} ($M=5$, $\lambda=\alpha=1$), at the times used in~\cite{KT}.}
\label{tab:poletemp}
\begin{tabular}{crrrrr}
\toprule
$t$ & $j=1$ & $j=2$ & $j=3$ & $j=4$ & $j=5$ \\
\midrule
0 & 0 & 0 & 0 & 0 & 1.000000 \\
0.2 & 0.000042 & 0.001245 & 0.019656 & 0.180213 & 0.847999 \\
0.4 & 0.000425 & 0.006603 & 0.054248 & 0.266876 & 0.739318 \\
0.6 & 0.001397 & 0.015071 & 0.086272 & 0.305144 & 0.657277 \\
0.8 & 0.002919 & 0.024631 & 0.110907 & 0.318111 & 0.592502 \\
3 & 0.017983 & 0.062981 & 0.130482 & 0.206357 & 0.270355 \\
30 & 0.020092 & 0.021258 & 0.021960 & 0.022426 & 0.022721 \\
\bottomrule
\end{tabular}
\end{table}
The times follow Figure~2 of~\cite{KT}. At $t=0.8$ the heat is still
concentrated in the outer layers, and equilibrium is nearly reached only at $t\sim30$
(Figure~\ref{fig:heatloc}).

\begin{figure}[tbp]
\centering
\includegraphics[width=\textwidth]{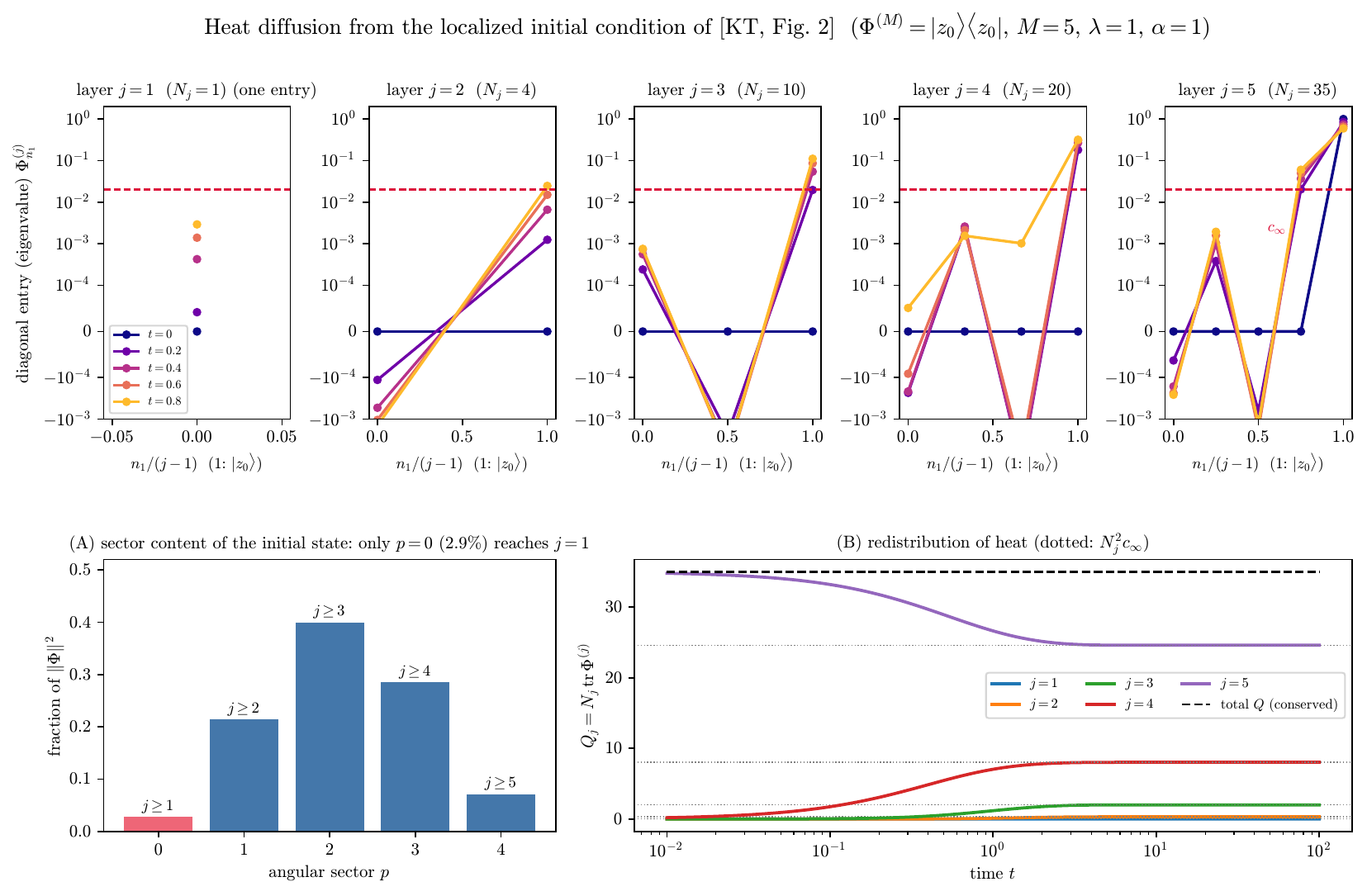}
\caption{Heat diffusion from the localized initial condition~\eqref{eq:pole}
($M=5$, $\lambda=1$, $\alpha=1$). Top: diagonal entries (${}=$ eigenvalues)
$\Phi^{(j)}_{n_1}$ of $\Phi^{(j)}(t)$ in each layer as functions of $n_1/(j-1)$. The
horizontal axis labels eigenstates, not positions; $n_1=j-1$ is the coherent state
$|z_0\rangle$ at the pole. (A) Sector content of
the initial state. (B) Heat content of each layer; the total is conserved and each
layer approaches $N_j^2c_\infty$ (dotted).}
\label{fig:heatloc}
\end{figure}

\begin{figure}[htbp]
\centering
\includegraphics[width=\textwidth]{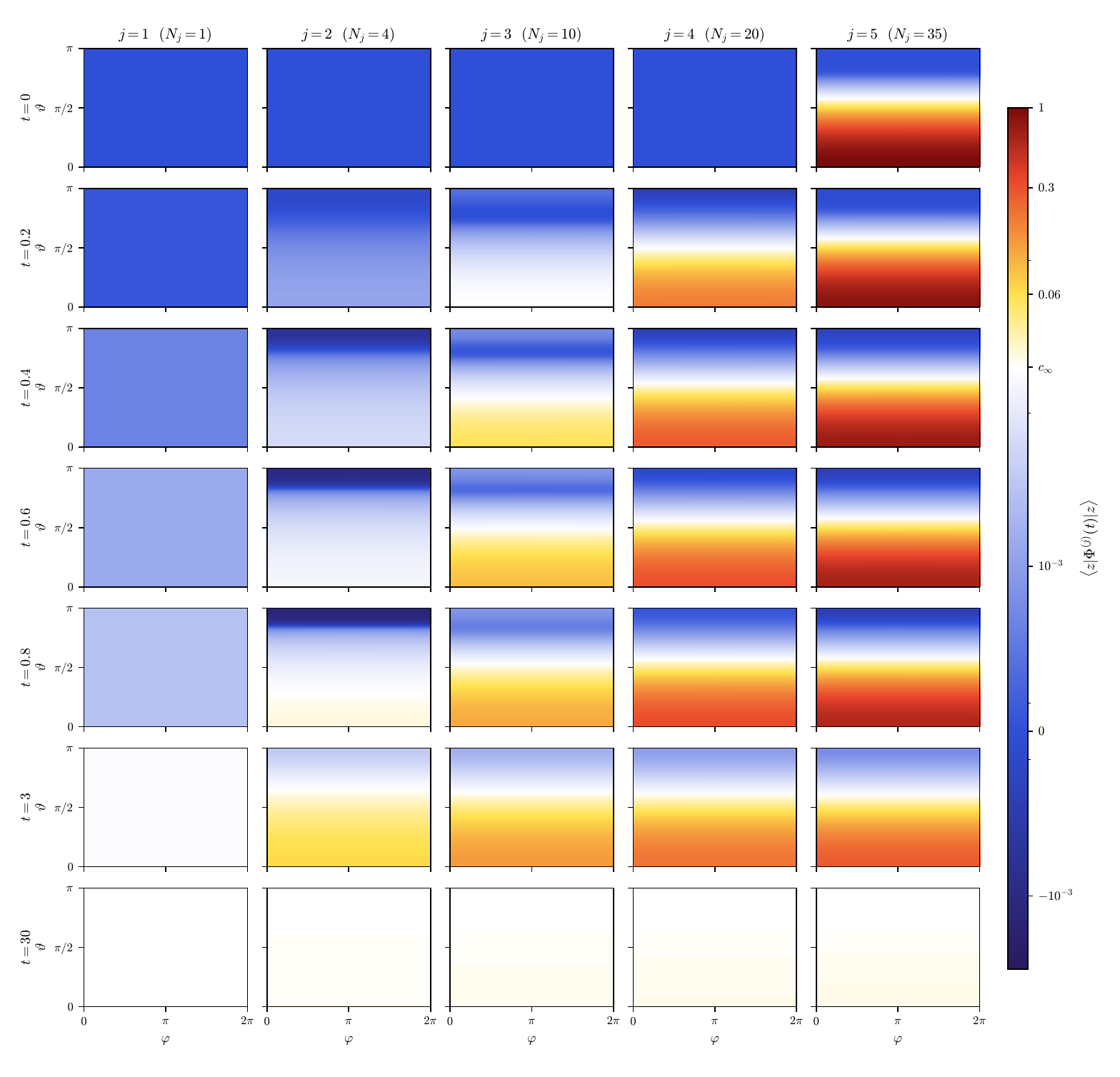}
\caption{$f^{(j)}=\langle z|\Phi^{(j)}|z\rangle$ of each layer for the localized initial
condition~\eqref{eq:pole} ($M=5$, $\lambda=1$, $\alpha=1$) on the $\CP^1$ through the
pole (a unit sphere), in the layout of Figure~2 of~\cite{KT}: columns are the
layers $j=1,\dots,5$, rows the times $t=0,\,0.2,\,0.4,\,0.6,\,0.8,\,3,\,30$. The
colour scale is symmetric-logarithmic, and white marks the equilibrium value
$c_\infty=0.0201$. $f^{(j)}$ is not a pointwise
temperature but a smoothing of the diagonal entries in the top row of
Figure~\ref{fig:heatloc}, and area on the sphere is not proportional to volume in
$\CP^3$ (the northern half is $1/8$ of the volume; $\vartheta=\pi$ is a whole
$\CP^2$). The dark blue negative values come from negative eigenvalues of
$\Phi^{(j)}$ and do not represent negative temperatures; see the text.}
\label{fig:heatsnap}
\end{figure}

To view the matrices as functions on $\CP^3$ we use
$f^{(j)}(z,t):=\langle z|\Phi^{(j)}(t)|z\rangle$; $f^{(j)}(z_0,t)$ is the value of
Table~\ref{tab:poletemp}. A positive matrix gives a non-negative function, but
$\Phi^{(j)}(t)$ need not be positive (see the end of this section), so $f^{(j)}$ can
take negative values. By the invariance above, $f^{(j)}$ depends on $|z_1|^2$
only, and $\vartheta$ defined by $|z_1|^2=\cos^2(\vartheta/2)$ is the geodesic
distance from $z_0$ in the metric $\tilde g=4g_{\rm FS}$ of Section~\ref{sec:trace}.
The $\CP^1$ through $z_0$, $z=(\cos\frac\vartheta2,\,e^{i\varphi}\sin\frac\vartheta2,\,0,\,0)$,
is a round sphere of radius one in this metric, so $f^{(j)}$ restricted to it is an
axially symmetric function on the sphere, as in Figure~2 of~\cite{KT}.
Figure~\ref{fig:heatsnap} shows it in the layout of that figure.

Two points must be kept in mind when reading Figure~\ref{fig:heatsnap}. First,
$f^{(j)}$ is not the value of the temperature at the point $z$ but the
coherent-state expectation value (Berezin symbol) of $\Phi^{(j)}$~\cite{Perelomov}.
Since $\Phi^{(j)}$ is diagonal and depends on $n_1$ only,
\[
f^{(j)}(\vartheta,t)=\sum_{n_1=0}^{j-1}\binom{j-1}{n_1}x^{n_1}(1-x)^{j-1-n_1}\,\Phi^{(j)}_{n_1}(t),
\qquad x=\cos^2\frac\vartheta2 .
\]
Thus $f^{(j)}$ is the Bernstein polynomial of the diagonal entries
$\Phi^{(j)}_{n_1}$ in the top row of Figure~\ref{fig:heatloc}, i.e.\ their smoothing
over a width $\sqrt{x(1-x)/(j-1)}$, and layer $j$
shows no angular structure finer than $\sim1/\sqrt{j-1}$. The width of the
outermost peak $f^{(5)}=\cos^8(\vartheta/2)$ at $t=0$ is that of the coherent state
$|z_0\rangle$ itself, and $f^{(1)}$ of the $1\times1$ innermost layer is constant.
Only the pole value is not smoothed: $f^{(j)}(z_0,t)=\Phi^{(j)}_{j-1}(t)$. Second,
the figure is a two-dimensional slice of the six-dimensional $\CP^3$, and area on
the sphere does not represent volume in $\CP^3$. At distance $\vartheta$ from $z_0$
the volume element of $\CP^3$ is $\propto\sin^5(\vartheta/2)\cos(\vartheta/2)\,d\vartheta$~\cite{BZ}
(on $S^2$, $\propto\sin\vartheta\,d\vartheta$), and $\vartheta=\pi$ is not a point but the
whole $\CP^2$ at $z_1=0$. The northern half $\vartheta<\pi/2$, which is half the area
of $S^2$, is only $1/8$ of the volume of $\CP^3$. The same value therefore stands for
more heat the further south it is, and heat that appears to stay near the pole in
the figure need not stay there in amount. The heat of a layer is
$Q_j=N_j\tr\Phi^{(j)}=N_j^2\int f^{(j)}d\mu$, with $d\mu$ the volume element
normalized to total volume one; it is shown in Figure~\ref{fig:heatloc}(B).

The diagonal entries in the top row of Figure~\ref{fig:heatloc} show small
negative values. Since $\Phi^{(j)}$ is diagonal, these are negative eigenvalues, and the dark blue near $\vartheta=\pi$ in
Figure~\ref{fig:heatsnap} (at most $-1.4\times10^{-3}$, at $t=0.8$) is their
smoothing: the evolution~\eqref{eq:heatexplicit} does not preserve positivity.
$\Util$ is not a positive map: transporting the
pole projector of layer $j=2,\dots,6$ to the next layer with $\Util$ produces a
negative eigenvalue of $-6\%$ to $-1.5\%$ of the largest one ($U_{2q}$ preserves
positivity). We measure the most negative eigenvalue during the evolution in two
ways: relative to the largest eigenvalue over all layers and times (global ratio),
and relative to the largest eigenvalue of the same layer (layer ratio). A positive
interlayer map alone does not remove the negative eigenvalues. With the
completely positive map $U^{\rm pos}$ of~\eqref{eq:Upos} in place of $\Util$, the
largest layer ratio at $M=5$ drops from about $13\%$ to about $2\%$ but does not vanish.
The negative values are, on the other hand, a finite-size effect and decrease as $M$ grows at
fixed outer radius. For the pole projector~\eqref{eq:pole} the global ratio
falls roughly as $1/M$, from
$1.7\times10^{-2}$ at $M=5$ to $9.4\times10^{-4}$ at $M=64$, while the layer ratio in
the outer layers ($r\geq\frac34R$) falls more slowly, from $7\%$ at $M=8$ to $3\%$ at
$M=64$. The negative values of the Berezin symbol in Figure~\ref{fig:heatsnap}, and
the negative eigenvalues for an initial condition spread to a finite angular width
by $e^{-\sigma\hat C_2}$, vanish exponentially in $M$.
The heat equation~\eqref{eq:heateq} is therefore to be read as the linear
relaxation generated by $\Delta$, which probes its spectrum and measure, rather
than as heat conduction of a positive density. Only the $p=0$ part of $\Phi^{(j)}$,
proportional to the identity ($c_j$ of Section~\ref{sec:heatsetup}), has the meaning
of a temperature, and the negative values in Figures~\ref{fig:heatloc}
and~\ref{fig:heatsnap} do not mean negative temperatures.

\FloatBarrier
\section{The twistor fibre}
\label{sec:kk}

$\CP^3$ is an $S^2$ bundle over $S^4$ preserved by $Sp(2)\cong Spin(5)\subset SU(4)$,
i.e.\ the twistor space of $S^4$~\cite{MHOD,Steinacker}. In this section we show
that this fibre structure does not appear in the spectrum of the onion as
constructed, and that among the $Spin(5)$-invariant deformations that make the fibre
visible, those obtainable by rescaling the $\mathfrak{su}(4)$ generators open no
five-dimensional energy window.

\label{sec:fibre}

Under $Spin(5)$ each angular sector decomposes as
\begin{equation}\label{eq:branch}
(p,0,p)\;\longrightarrow\;\bigoplus_{s=0}^{p}\,(p-s,2s),\qquad
\dim(a,b)=\frac{(a+1)(b+1)(a+b+2)(2a+b+3)}{6}
\end{equation}
\cite{Steinacker}, where $(a,b)$ are Dynkin labels of $Spin(5)$; the
components with $s=0$ are harmonics on $S^4$, and those with $s\geq1$ transform as
spin $s$ on $S^4$. The operator $\Delta$ commutes with $SU(4)$, and $(p,0,p)$ occurs
exactly once in each $\End(V_{j-1})$ with $j-1\geq p$. By Schur's lemma, the restriction of $\Delta$
to the $(p,0,p)$-isotypic component of the whole onion is
$\id_{d_p}\otimes\Delta^{(p)}$, with $\Delta^{(p)}$ the chain operator of
Section~\ref{sec:reduction}. All components of~\eqref{eq:branch} are therefore
exactly degenerate, and the Kaluza--Klein mass gap associated with the fibre
vanishes identically. Seen from $S^4$, the fibre modes behave as a tower of
massless higher-spin fields, consistent with the interpretation
of~\cite{Steinacker}. Separating them requires a term that breaks $SU(4)$, i.e.\ a
squashing of the twistor bundle.

\subsection{Squashing and matrix configurations}
\label{sec:kk-setup}

We decompose $\mathfrak{su}(4)=\mathfrak{sp}(2)\oplus\mathfrak m$ orthogonally, where
$\mathfrak{sp}(2)=\{X\in\mathfrak{su}(4):X^T\Omega+\Omega X=0\}$ and
$\Omega=\mathrm{diag}(i\sigma_2,i\sigma_2)$ ($\sigma_2$ a Pauli matrix). With
bases $S^a$ ($a=1,\dots,10$) of $\mathfrak{sp}(2)$ and $P^\alpha$ ($\alpha=1,\dots,5$) of
$\mathfrak m$, normalized as $4\times4$ matrices by
$\tr(S^aS^b)=\tr(P^\alpha P^\beta)=\frac12\delta$ (and $\tr(S^aP^\alpha)=0$ since
$\mathfrak{sp}(2)\perp\mathfrak m$), we define on each layer
\begin{equation}\label{eq:C5}
\hat C_2^{Spin(5)}\Phi^{(j)}=\sum_{a=1}^{10}\bigl[\rho(S^a),[\rho(S^a),\Phi^{(j)}]\bigr],\qquad
\hat\Sigma_5\Phi^{(j)}=\sum_{\alpha=1}^{5}\bigl[\rho(P^\alpha),[\rho(P^\alpha),\Phi^{(j)}]\bigr].
\end{equation}
Then $\hat C_2=\hat C_2^{Spin(5)}+\hat\Sigma_5$, and on $(p-s,2s)$ one has
$\hat C_2^{Spin(5)}=\frac12[p(p+3)+s(s+1)]$ and
$\hat\Sigma_5=\frac12[p(p+3)-s(s+1)]$ (\cite{Steinacker},
where the Casimirs are normalized to twice ours). The fibre Laplacian
of~\cite{MHOD},
\begin{equation}\label{eq:DeltaI}
\Delta_I:=2\hat C_2^{Spin(5)}-\hat C_2,\qquad
\Delta_I\big|_{(p-s,2s)}=s(s+1),
\end{equation}
therefore has the spectrum of a round sphere of unit radius, independent of both
$p$ and the layer.

The squashing of~\cite{MHOD} replaces $\hat C_2$ in the angular
Laplacian~\eqref{eq:KL} by $\hat C_2+h\,\Delta_I$. The value $h=0$ is the round
$\CP^3$, and the ratio of the radii $R_{S^2}$ and $R_{S^4}$ of the fibre $S^2$ and the
base $S^4$ is
\begin{equation}\label{eq:Rfibre}
\frac{R_{S^2}}{R_{S^4}}=\frac{1}{\sqrt{1+h}}
\end{equation}
\cite{MHOD}. The range in which this input can arise from rescaled
configurations of the generators is determined by the identity
\begin{equation}\label{eq:KLsplit}
\hat C_2+h\,\Delta_I=(1+h)\,\hat C_2^{Spin(5)}+(1-h)\,\hat\Sigma_5 .
\end{equation}
Since $\mathfrak{sp}(2)$ and $\mathfrak m$ are inequivalent irreducible
representations of $Sp(2)$, Schur's lemma restricts the $Sp(2)$-invariant
double-commutator Laplacians of configurations linear in the generators to
$a\,\hat C_2^{Spin(5)}+b\,\hat\Sigma_5$ with $a,b\geq0$. The squashing lies on the
line $(a,b)=(1+h,1-h)$ (Figure~\ref{fig:squash}(a)), so for $-1\leq h\leq1$ it arises
from the rescaled configuration $Y$: $\rho(S^a)\to\sqrt{1+h}\,\rho(S^a)$,
$\rho(P^\alpha)\to\sqrt{1-h}\,\rho(P^\alpha)$, while outside this range it arises from
no configuration linear in the generators. Since
$\Box\,\rho(S^a)=(4+2h)\,\rho(S^a)$ and $\Box\,\rho(P^\alpha)=4\,\rho(P^\alpha)$ with
$\Box=\sum_B[Y^B,[Y^B,\cdot\,]]$ ($Y^B$ the fifteen rescaled matrices), $Y$ is an exact classical solution of the
Yang--Mills matrix model with separate $Sp(2)$-invariant (negative) mass terms for the
two sets of matrices (cf.\ the $SO(5)$-invariant effective model of~\cite{SperlingSteinacker},
which has separate, possibly negative, mass terms for two sets of five matrices).

The restriction to rescaled generators is not essential. The Laplacian of an
arbitrary configuration of Hermitian matrices may be averaged over $Sp(2)$ without
changing the target, and since $\End(V_{j-1})$ is multiplicity-free also under $Sp(2)$
(the label $(p-s,2s)$ in~\eqref{eq:branch} determines $(p,s)$), the average is fixed by
one non-negative weight per $Sp(2)$-irreducible component of the configuration.
Matching the spectrum $p(p+3)+h\,s(s+1)$ is linear in these weights; for
$k=j-1\leq6$ the unique solution is supported on the rescaled generators and is
non-negative exactly for $-1\leq h\leq1$ (the obstruction lies in the sectors $p=2$,
figure~\ref{fig:squash}(b); the coefficient matrix and the procedure are given in
appendix~\ref{app:lp}). Approximately the situation differs: the relative error
$\varepsilon$ needed to match the sectors $p\leq2$ decreases with $k$
(Figure~\ref{fig:squash}(c)), the optimal configurations consisting of operators along
the fibre (products of the $\mathfrak{sp}(2)$ generators). In the continuum, the
horizontal distribution of the twistor fibration (the tangent vectors orthogonal to
the fibre) generates the whole tangent space under brackets, so no Hamiltonian
vector field points purely along the fibre; but functions oscillating rapidly along
the fibre have Hamiltonian vector fields nearly along it, which suggests that any $h>-1$ can be approached as
$k\to\infty$. In~\cite{MHOD} the term $h\Delta_I$ with large $h$ is added to the
field action as a penalty rather than obtained from a configuration.

\begin{figure}[t]
\centering
\includegraphics[width=\textwidth]{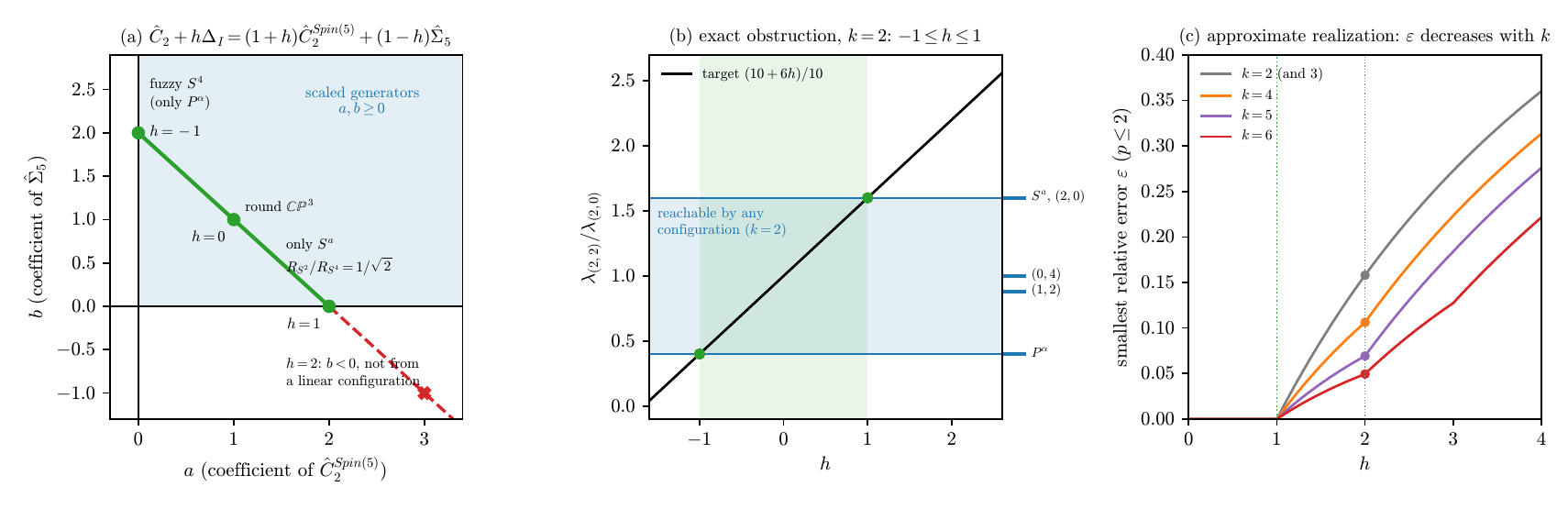}
\caption{Realizing the squashing $\hat C_2+h\Delta_I$ by matrix configurations.
(a) The $Sp(2)$-invariant double-commutator Laplacians of configurations linear in the
generators are $a\,\hat C_2^{Spin(5)}+b\,\hat\Sigma_5$ with $a,b\geq0$ (shaded). The
squashing is the line $(a,b)=(1+h,1-h)$ (identity~\eqref{eq:KLsplit}), which leaves the
shaded region for $h>1$. The endpoint $h=-1$ is the Laplacian of fuzzy $S^4$, and $h=1$ is
the configuration of the $\rho(S^a)$ alone.
(b) Exact obstruction at $k=2$. Let $\lambda_{(p,s)}$ be the eigenvalue on the sector
$(p-s,2s)$ of the $Sp(2)$-averaged Laplacian of a configuration. For an arbitrary
configuration of Hermitian matrices the ratio $\lambda_{(2,2)}/\lambda_{(2,0)}$ lies in
$[0.4,\,1.6]$ (values for each $Sp(2)$-irreducible component are marked on the right;
derivation in Appendix~\ref{app:lp}). The target $(10+6h)/10$ (solid line) lies in this
range exactly for $-1\leq h\leq1$.
(c) Approximate realization: the smallest relative error $\varepsilon$ with which the
spectrum in the sectors $p\leq2$ can be reproduced (linear programming). It vanishes
for $-1\leq h\leq1$ and decreases with $k$ for $h>1$ ($0.16$, $0.11$, $0.069$, $0.049$ at
$h=2$ for $k=3,4,5,6$; $k=3$ coincides with $k=2$).}
\label{fig:squash}
\end{figure}

\subsection{No five-dimensional window}
\label{sec:kk-interp}

The squashed onion is
block-diagonal in $(p,s)$, and each block is the chain~\eqref{eq:tridiag} with the
centrifugal term replaced by $[p(p+3)+h\,s(s+1)]/r_j^2$. In terms of the energy $E$ (an eigenvalue of
$-\Delta$), a fibre level $s$ enters the density of states at the Kaluza--Klein scale of the fibre at the outer
boundary, $E\simeq s(s+1)(1+h)/R^2$, so that a five-dimensional regime can exist
only in the energy range
\begin{equation}\label{eq:range5d}
\frac{1}{R^2}\ll E\ll\min\Bigl(\frac{2(1+h)}{R^2},\,\frac{1}{\lambda^2}\Bigr).
\end{equation}
For $h\leq1$, the range realized exactly by matrix configurations (at least up to
$k=6$), the upper limit of this range is at most $4/R^2$, at most a factor of four above the lower limit; no
energy satisfies the condition, and no intermediate five-dimensional regime exists. The exact modes
confirm this: with the Neumann condition~\eqref{eq:neumann}, the energy of the lowest mode
with a fibre excitation, $(p,s)=(1,1)$, is only $1.49$ times that of the lowest mode of
$(1,0)$ at $h=1$ (and $5.7$ times at $h=10$), in the continuum cone and in the discrete chains alike
($M=80$--$20480$), close to the ratio $(4+2h)/4$ of the angular eigenvalues. The fibre radius at
the outer boundary also stays at $R/\sqrt{1+h}\geq R/\sqrt2$, far larger than the
noncommutativity scale $\lambda$. Such a range appears when a term with
$h\gg1$ is added to the action, as in~\cite{MHOD}, or possibly when it is
approximated by configurations of high order at large $k$. The density of states
then crosses over from five to seven dimensions according to Weyl's law, but since
this depends on the externally given value of $h$ and does not change the conclusion
that the radial operator is that of the cone (next paragraph), we do not pursue it
here.

Even if $h\to\infty$ is imposed from outside, the onion does not become flat
$\mathbb{R}^5$. The $s=0$ chains do not depend on $h$, so their continuum limit
remains the operator~\eqref{eq:Lp} with $\nu_p=\sqrt{(p+\frac32)^2+4}$, not the flat
cone $C(S^4)=\mathbb{R}^5$ with $\nu=p+\frac32$: the radial measure
$N_j^2\propto r^6$ inherited from $\CP^3$ is unchanged by the squashing.

\section{Summary and outlook}
\label{sec:conclusion}

The main conclusion of this paper is that the continuum limit of the matrix model
built from concentric fuzzy $\CP^3$'s is the metric cone $C(\CP^3)$ over $\CP^3$,
not flat $\mathbb{R}^7$. Since every ingredient was fixed before any comparison was
made, this geometry is a prediction rather than an input. The cone appears with the
innermost layer in place of the apex; it is reached at second order in each fixed
angular sector, it is robust against the placement of the outer boundary, and the
boundary radius is itself selected by the spectrum. At $20480$ layers the sectors
$p\leq8$ agree with the cone to $1\times10^{-9}$, while flat space deviates by
$8$--$18\%$ for $1\leq p\leq8$. The cone and flat $\mathbb{R}^7$ share the $s$-wave
sector, so the geometry becomes visible only in sectors with $p\geq1$, and a test
restricted to $s$ waves cannot tell it from flat space. The identification does not
depend on the scale of the link. The $S^2$ onion gives flat
$\mathbb{R}^3$ because the link of its cone is the round unit sphere; stacking a general
fuzzy space is expected to yield not flat space but the metric cone over that space. The free propagator converges to that of the cone
at the same order sector by sector and, in the symmetric symbol, in position space
(the coherent-state symbol adds a first-order finite-layer correction), and the
propagator from the innermost layer approaches that from the apex. Heat diffusion
expresses the same geometry in thermodynamic terms: the equilibrium plateau
$c_\infty=N_M^2/\sum_jN_j^2$ yields $Mc_\infty\to7$, which reflects the $r^6$ growth
of the measure, and the relaxation times of the angular sectors agree with the
Bessel zeros of the cone to $0.5\%$ already at $M=12$. Sectors with $p$ close to the
cutoff $M-1$, whose chains contain only a few layers, are not resolved.

This conclusion rests on the construction of the interlayer maps. These maps,
which are not evident for $\CP^3$, were obtained in closed form from the polar
decomposition of the second-quantized Schwinger-boson lift, using an operator
identity of~\cite{DHMO}. The coefficient-copying prescription of Kov\'a\v{c}ik and
Tekel, carried out in an explicit tensor basis, reproduces them identically.
Deviations at order $1/j$ shift the centrifugal term, while the completely positive
map $U^{\rm pos}$, which differs at order $j^{-2}$ and preserves coherent-state
symbols, reaches the same continuum limit only at first order. The resulting
radial Laplacian follows from the integration rule with a geometric midpoint and
has a symmetric form with constant off-diagonal entries. For $\CP^1$, away from the
outer boundary, it reduces to the fuzzy onion of~\cite{KT}, and its $U^{\rm pos}$
counterpart to the Laplacian of~\cite{GP}.

The twistor fibration of $\CP^3$ over $S^4$ is invisible in the spectrum of the
onion as constructed: every mode along the fibre is degenerate with an $S^4$
harmonic of the same $p$ and, seen from $S^4$, belongs to a tower of massless
higher-spin fields. Among the $Spin(5)$-invariant deformations squashing the fibre,
those obtainable by rescaling the generators cannot push the fibre-to-base radius
ratio below $1/\sqrt2$, and in that range no five-dimensional energy window opens.
Even when a stronger squashing is imposed from outside, the radial operator of the
surviving sectors remains that of the cone over $\CP^3$.

We close with open problems. The radial kinetic term has been fixed by the
integration rule, and deriving it from a parent matrix model is the first task;
quantum effects of the string modes connecting the layers~\cite{SteinackerString}
and reduction from Moyal space~\cite{GLMV} are possible starting points. The
continuum limit has been established mode by mode, and convergence as operators
remains to be shown. Our tests probe the continuum geometry but not
noncommutativity itself, so the phase structure of the interacting
theory~\eqref{eq:action} is the natural next step, extending the studies of the $S^2$
onion~\cite{KTH,HKT} to $\CP^3$. For the squashing it remains open whether quantum
effects favour squashing the fibre and whether approximate realizations of $h>1$
become exact at large $k$.

\section*{Acknowledgment}
We used an AI assistant (Claude, Anthropic) to improve the numerical programs and for language editing of the manuscript.
The authors are responsible for the interpretation of the results and for the content of the paper.

\let\doi\relax
\bibliographystyle{ptephy}
\bibliography{refs}

@article{BFSS,
    author = "Banks, Tom and Fischler, W. and Shenker, S. H. and Susskind, Leonard",
    title = "{M theory as a matrix model: A conjecture}",
    eprint = "hep-th/9610043",
    archivePrefix = "arXiv",
    reportNumber = "RU-96-95, SU-ITP-96-12, UTTG-13-96",
    doi = "10.1103/PhysRevD.55.5112",
    journal = "Phys. Rev. D",
    volume = "55",
    pages = "5112--5128",
    year = "1997"
}

@article{IKKT,
    author = "Ishibashi, N. and Kawai, H. and Kitazawa, Y. and Tsuchiya, A.",
    title = "{A Large N reduced model as superstring}",
    eprint = "hep-th/9612115",
    archivePrefix = "arXiv",
    reportNumber = "KEK-TH-503",
    doi = "10.1016/S0550-3213(97)00290-3",
    journal = "Nucl. Phys. B",
    volume = "498",
    pages = "467--491",
    year = "1997"
}

@article{Steinacker2010,
    author = "Steinacker, Harold",
    title = "{Emergent Geometry and Gravity from Matrix Models: an Introduction}",
    eprint = "1003.4134",
    archivePrefix = "arXiv",
    primaryClass = "hep-th",
    reportNumber = "UWTHPH-2010-4",
    doi = "10.1088/0264-9381/27/13/133001",
    journal = "Class. Quant. Grav.",
    volume = "27",
    pages = "133001",
    year = "2010"
}

@article{KNT,
    author = "Kim, Sang-Woo and Nishimura, Jun and Tsuchiya, Asato",
    title = "{Expanding (3+1)-dimensional universe from a Lorentzian matrix model for superstring theory in (9+1)-dimensions}",
    eprint = "1108.1540",
    archivePrefix = "arXiv",
    primaryClass = "hep-th",
    reportNumber = "KEK-TH-1484, OU-HET-720-2011",
    doi = "10.1103/PhysRevLett.108.011601",
    journal = "Phys. Rev. Lett.",
    volume = "108",
    pages = "011601",
    year = "2012"
}

@phdthesis{Hoppe,
    author = "Hoppe, Jens",
    title = "{Quantum theory of a massless relativistic surface and a two-dimensional bound state problem}",
    school = "Massachusetts Institute of Technology",
    year = "1982"
}

@article{Madore,
    author = "Madore, J.",
    title = "{The Fuzzy sphere}",
    reportNumber = "LPTHE-ORSAY-91-09",
    doi = "10.1088/0264-9381/9/1/008",
    journal = "Class. Quant. Grav.",
    volume = "9",
    pages = "69--88",
    year = "1992"
}

@article{Myers,
    author = "Myers, Robert C.",
    title = "{Dielectric branes}",
    eprint = "hep-th/9910053",
    archivePrefix = "arXiv",
    reportNumber = "MCGILL-99-27, NSF-ITP-99-113",
    doi = "10.1088/1126-6708/1999/12/022",
    journal = "JHEP",
    volume = "12",
    pages = "022",
    year = "1999"
}

@article{IKTW,
    author = "Iso, Satoshi and Kimura, Yusuke and Tanaka, Kanji and Wakatsuki, Kazunori",
    title = "{Noncommutative gauge theory on fuzzy sphere from matrix model}",
    eprint = "hep-th/0101102",
    archivePrefix = "arXiv",
    reportNumber = "KEK-TH-737, TIT-HEP-462",
    doi = "10.1016/S0550-3213(01)00173-0",
    journal = "Nucl. Phys. B",
    volume = "604",
    pages = "121--147",
    year = "2001"
}

@article{ABNN,
    author = "Azuma, Takehiro and Bal, Subrata and Nagao, Keiichi and Nishimura, Jun",
    title = "{Nonperturbative studies of fuzzy spheres in a matrix model with the Chern-Simons term}",
    eprint = "hep-th/0401038",
    archivePrefix = "arXiv",
    reportNumber = "KUNS-1883, KEK-TH-929",
    doi = "10.1088/1126-6708/2004/05/005",
    journal = "JHEP",
    volume = "05",
    pages = "005",
    year = "2004"
}

@article{Kitazawa,
    author = "Kitazawa, Yoshihisa",
    title = "{Matrix models in homogeneous spaces}",
    eprint = "hep-th/0207115",
    archivePrefix = "arXiv",
    reportNumber = "KEK-TH-833",
    doi = "10.1016/S0550-3213(02)00682-X",
    journal = "Nucl. Phys. B",
    volume = "642",
    pages = "210--226",
    year = "2002"
}

@article{BDLMO,
    author = "Balachandran, A. P. and Dolan, Brian P. and Lee, Joo-Han and Martin, X. and O'Connor, Denjoe",
    title = "{Fuzzy complex projective spaces and their star products}",
    eprint = "hep-th/0107099",
    archivePrefix = "arXiv",
    reportNumber = "SU-4252-740, DIAS-STP-01-11",
    doi = "10.1016/S0393-0440(02)00020-7",
    journal = "J. Geom. Phys.",
    volume = "43",
    pages = "184--204",
    year = "2002"
}

@article{ABIY,
    author = "Alexanian, G. and Balachandran, A. P. and Immirzi, G. and Ydri, B.",
    title = "{Fuzzy $\mathbb{CP}^2$}",
    eprint = "hep-th/0103023",
    archivePrefix = "arXiv",
    reportNumber = "SU-4240-730",
    doi = "10.1016/S0393-0440(01)00070-5",
    journal = "J. Geom. Phys.",
    volume = "42",
    pages = "28--53",
    year = "2002"
}

@article{GR,
    author = "Guralnik, Zachary and Ramgoolam, Sanjaye",
    title = "{On the Polarization of unstable D0-branes into noncommutative odd spheres}",
    eprint = "hep-th/0101001",
    archivePrefix = "arXiv",
    reportNumber = "BROWN-HET-1251, MIT-CTP-3069",
    doi = "10.1088/1126-6708/2001/02/032",
    journal = "JHEP",
    volume = "02",
    pages = "032",
    year = "2001"
}

@article{Ramgoolam,
    author = "Ramgoolam, Sanjaye",
    title = "{On spherical harmonics for fuzzy spheres in diverse dimensions}",
    eprint = "hep-th/0105006",
    archivePrefix = "arXiv",
    reportNumber = "BROWN-HET-1260",
    doi = "10.1016/S0550-3213(01)00315-7",
    journal = "Nucl. Phys. B",
    volume = "610",
    pages = "461--488",
    year = "2001"
}

@article{KTcorfu,
    author = "Kov{\'a}{\v{c}}ik, Samuel and Tekel, Juraj",
    title = "{The Fuzzy Onion: A Proposal}",
    eprint = "2304.14205",
    archivePrefix = "arXiv",
    primaryClass = "hep-th",
    doi = "10.22323/1.436.0312",
    journal = "PoS",
    volume = "CORFU2022",
    pages = "312",
    year = "2023"
}

@article{KT,
    author = "Kov{\'a}{\v{c}}ik, Samuel and Tekel, Juraj",
    title = "{Fuzzy onionlike space as a matrix model}",
    eprint = "2309.00576",
    archivePrefix = "arXiv",
    primaryClass = "hep-th",
    doi = "10.1103/PhysRevD.109.105004",
    journal = "Phys. Rev. D",
    volume = "109",
    number = "10",
    pages = "105004",
    year = "2024"
}

@article{KTH,
    author = "Kov{\'a}{\v{c}}ik, S. and Tekel, J. and Hrmo, M.",
    title = "{The Fuzzy Onion: An Initial Study}",
    eprint = "2405.12393",
    archivePrefix = "arXiv",
    primaryClass = "hep-th",
    doi = "10.22323/1.463.0253",
    journal = "PoS",
    volume = "CORFU2023",
    pages = "253",
    year = "2024"
}

@article{HKT,
    author = "Hrmo, Matej and Kov{\'a}{\v{c}}ik, Samuel and Tekel, Juraj",
    title = "{Numerical Study of Scalar Field Theory on the Fuzzy Onion}",
    eprint = "2608.08855",
    archivePrefix = "arXiv",
    primaryClass = "hep-th",
    month = "8",
    year = "2026"
}

@article{Penrose,
    author = "Penrose, R.",
    title = "{Twistor algebra}",
    doi = "10.1063/1.1705200",
    journal = "J. Math. Phys.",
    volume = "8",
    pages = "345",
    year = "1967"
}

@article{AHS,
    author = "Atiyah, M. F. and Hitchin, Nigel J. and Singer, I. M.",
    title = "{Self-duality in four-dimensional Riemannian geometry}",
    doi = "10.1098/rspa.1978.0143",
    journal = "Proc. Roy. Soc. Lond. A",
    volume = "362",
    pages = "425--461",
    year = "1978"
}

@article{CLT,
    author = "Castelino, Judith and Lee, Sangmin and Taylor, Washington",
    title = "{Longitudinal five-branes as four spheres in matrix theory}",
    eprint = "hep-th/9712105",
    archivePrefix = "arXiv",
    reportNumber = "PUPT-1749",
    doi = "10.1016/S0550-3213(98)00291-0",
    journal = "Nucl. Phys. B",
    volume = "526",
    pages = "334--350",
    year = "1998"
}

@article{MedinaOConnor,
    author = "Medina, Julieta and O'Connor, Denjoe",
    title = "{Scalar field theory on fuzzy $S^4$}",
    eprint = "hep-th/0212170",
    archivePrefix = "arXiv",
    doi = "10.1088/1126-6708/2003/11/051",
    journal = "JHEP",
    volume = "11",
    pages = "051",
    year = "2003"
}

@article{MHOD,
    author = "Medina, Julieta and Huet, Idrish and O'Connor, Denjoe and Dolan, Brian P.",
    title = "{Scalar and Spinor Field Actions on Fuzzy $S^4$: fuzzy $\mathbb{CP}^3$ as a $S^2_F$ bundle over $S^4_F$}",
    eprint = "1208.0348",
    archivePrefix = "arXiv",
    primaryClass = "hep-th",
    doi = "10.1007/JHEP08(2012)070",
    journal = "JHEP",
    volume = "08",
    pages = "070",
    year = "2012"
}

@article{SperlingSteinacker,
    author = "Sperling, Marcus and Steinacker, Harold C.",
    title = "{Covariant 4-dimensional fuzzy spheres, matrix models and higher spin}",
    eprint = "1704.02863",
    archivePrefix = "arXiv",
    primaryClass = "hep-th",
    reportNumber = "UWTHPH-2017-08",
    doi = "10.1088/1751-8121/aa8295",
    journal = "J. Phys. A",
    volume = "50",
    number = "37",
    pages = "375202",
    year = "2017"
}

@book{Steinacker,
    author = "Steinacker, Harold C.",
    title = "{Quantum Geometry, Matrix Theory, and Gravity}",
    doi = "10.1017/9781009440776",
    isbn = "978-1-009-44077-6, 978-1-009-44078-3",
    publisher = "Cambridge University Press",
    year = "2024"
}

@article{ABJM,
    author = "Aharony, Ofer and Bergman, Oren and Jafferis, Daniel Louis and Maldacena, Juan",
    title = "{$\mathcal{N}=6$ superconformal Chern-Simons-matter theories, M2-branes and their gravity duals}",
    eprint = "0806.1218",
    archivePrefix = "arXiv",
    primaryClass = "hep-th",
    reportNumber = "WIS-12-08-JUN-DPP",
    doi = "10.1088/1126-6708/2008/10/091",
    journal = "JHEP",
    volume = "10",
    pages = "091",
    year = "2008"
}

@book{Helgason,
    author = "Helgason, Sigurdur",
    title = "{Groups and Geometric Analysis}",
    publisher = "Academic Press",
    year = "1984"
}

@book{Perelomov,
    author = "Perelomov, Askold",
    title = "{Generalized Coherent States and Their Applications}",
    doi = "10.1007/978-3-642-61629-7",
    isbn = "978-3-540-15912-4, 978-3-642-64891-5, 978-3-642-61629-7",
    publisher = "Springer",
    series = "Theoretical and Mathematical Physics",
    year = "1986"
}

@article{DHMO,
    author = "Dolan, Brian P. and Huet, Idrish and Murray, Sean and O'Connor, Denjoe",
    title = "{Noncommutative vector bundles over fuzzy $\mathbb{CP}^N$ and their covariant derivatives}",
    eprint = "hep-th/0611209",
    archivePrefix = "arXiv",
    doi = "10.1088/1126-6708/2007/07/007",
    journal = "JHEP",
    volume = "07",
    pages = "007",
    year = "2007"
}

@article{GP,
    author = "G{\'a}likov{\'a}, Veronika and Pre{\v{s}}najder, Peter",
    title = "{Coulomb problem in non-commutative quantum mechanics - Exact solution}",
    eprint = "1112.4643",
    archivePrefix = "arXiv",
    primaryClass = "math-ph",
    year = "2011"
}

@article{GKP,
    author = "G{\'a}likov{\'a}, Veronika and Kov{\'a}{\v{c}}ik, Samuel and Pre{\v{s}}najder, Peter",
    title = "{Quantum Mechanics in Noncommutative Space}",
    eprint = "1510.04496",
    archivePrefix = "arXiv",
    primaryClass = "math-ph",
    journal = "Acta Phys. Slov.",
    volume = "65",
    number = "3",
    pages = "153--234",
    year = "2015"
}

@article{HLS,
    author = "Hammou, A. B. and Lagraa, M. and Sheikh-Jabbari, M. M.",
    title = "{Coherent state induced star product on $\mathbb{R}^3_\lambda$ and the fuzzy sphere}",
    eprint = "hep-th/0110291",
    archivePrefix = "arXiv",
    doi = "10.1103/PhysRevD.66.025025",
    journal = "Phys. Rev. D",
    volume = "66",
    pages = "025025",
    year = "2002"
}

@article{VW,
    author = "Vitale, Patrizia and Wallet, Jean-Christophe",
    title = "{Noncommutative field theories on $\mathbb{R}^3_\lambda$: Toward UV/IR mixing freedom}",
    eprint = "1212.5131",
    archivePrefix = "arXiv",
    primaryClass = "hep-th",
    doi = "10.1007/JHEP04(2013)115",
    journal = "JHEP",
    volume = "04",
    pages = "115",
    year = "2013",
    note = "[Addendum: JHEP 03, 115 (2015)]"
}

@article{Hrmo2026,
    author = "Hrmo, M. and Kov{\'a}{\v{c}}ik, S. and Magdolenov{\'a}, K. and Manta, A. and Nede{\v{l}}kov{\'a}, K. and Rusn{\'a}k, P. and Steinacker, H. C. and Tekel, J.",
    title = "{Accretion, mergers, and metastability of fuzzy spheres in a three-matrix model}",
    eprint = "2609.00328",
    archivePrefix = "arXiv",
    primaryClass = "hep-th",
    year = "2026"
}

@article{Sako,
    author = "Sako, Akifumi",
    title = "{Quantization of Algebraic Varieties Defined by Casimir Polynomials via Matrix Regularization: Fuzzy $S^7$ and Beyond}",
    eprint = "2608.04717",
    archivePrefix = "arXiv",
    primaryClass = "hep-th",
    year = "2026"
}

@article{IKK,
    author = "Iso, Satoshi and Kawai, Hikaru and Kitazawa, Yoshihisa",
    title = "{Bilocal fields in noncommutative field theory}",
    eprint = "hep-th/0001027",
    archivePrefix = "arXiv",
    reportNumber = "KEK-TH-669, KUNS-1628",
    doi = "10.1016/S0550-3213(00)00092-4",
    journal = "Nucl. Phys. B",
    volume = "576",
    pages = "375--398",
    year = "2000"
}

@article{SteinackerString,
    author = "Steinacker, Harold C.",
    title = "{String states, loops and effective actions in noncommutative field theory and matrix models}",
    eprint = "1606.00646",
    archivePrefix = "arXiv",
    primaryClass = "hep-th",
    reportNumber = "UWTHPH-2016-9",
    doi = "10.1016/j.nuclphysb.2016.06.029",
    journal = "Nucl. Phys. B",
    volume = "910",
    pages = "346--373",
    year = "2016"
}

@article{SteinackerTekel,
    author = "Steinacker, Harold C. and Tekel, Juraj",
    title = "{String modes, propagators and loops on fuzzy spaces}",
    eprint = "2203.02376",
    archivePrefix = "arXiv",
    primaryClass = "hep-th",
    reportNumber = "UWThPh-2022-3",
    doi = "10.1007/JHEP06(2022)136",
    journal = "JHEP",
    volume = "06",
    pages = "136",
    year = "2022"
}

@article{GLMV,
    author = "Gracia-Bondia, Jose M. and Lizzi, F. and Marmo, G. and Vitale, P.",
    title = "{Infinitely many star products to play with}",
    eprint = "hep-th/0112092",
    archivePrefix = "arXiv",
    reportNumber = "DSF-40-01",
    doi = "10.1088/1126-6708/2002/04/026",
    journal = "JHEP",
    volume = "04",
    pages = "026",
    year = "2002"
}

@article{DSS,
    author = "DeBellis, Joshua and Saemann, Christian and Szabo, Richard J.",
    title = "{Quantized Nambu-Poisson Manifolds and n-Lie Algebras}",
    eprint = "1001.3275",
    archivePrefix = "arXiv",
    primaryClass = "hep-th",
    reportNumber = "HWM-10-3, EMPG-10-03",
    doi = "10.1063/1.3503773",
    journal = "J. Math. Phys.",
    volume = "51",
    pages = "122303",
    year = "2010"
}

@article{KDR,
    author = "Kov{\'a}{\v{c}}ik, Samuel and {\v{D}}ur{\'\i}{\v{s}}kov{\'a}, Michaela and Rusn{\'a}k, Patrik",
    title = "{Phenomenology of the dispersion law in a three-dimensional quantum space}",
    eprint = "2402.05832",
    archivePrefix = "arXiv",
    primaryClass = "gr-qc",
    doi = "10.1142/S0217751X24501549",
    journal = "Int. J. Mod. Phys. A",
    volume = "40",
    number = "01",
    pages = "2450154",
    year = "2025"
}

@article{AV,
    author = "Acharyya, Nirmalendu and Vaidya, Sachindeo",
    title = "{Fuzzy Conifold $Y_F^6$ and Monopoles on $S_F^2\times S_F^2$}",
    eprint = "1307.6079",
    archivePrefix = "arXiv",
    primaryClass = "hep-th",
    doi = "10.1142/S0217751X14500067",
    journal = "Int. J. Mod. Phys. A",
    volume = "29",
    pages = "1450006",
    year = "2014"
}

@book{BZ,
    author = "Bengtsson, Ingemar and {\.Z}yczkowski, Karol",
    title = "{Geometry of Quantum States: An Introduction to Quantum Entanglement}",
    edition = "2nd",
    publisher = "Cambridge University Press",
    year = "2017"
}

@article{Cheeger,
    author = "Cheeger, Jeff",
    title = "{Spectral geometry of singular Riemannian spaces}",
    doi = "10.4310/jdg/1214438175",
    journal = "J. Diff. Geom.",
    volume = "18",
    number = "4",
    pages = "575--657",
    year = "1983"
}

\appendix

\section{Deformations of the Up map and the effective centrifugal term}
\label{app:centrifugal}

The effect of a normalization $t_p(j)=\sqrt{N_{j+1}/N_j}\,(1+\varepsilon_{p,j})$ that
violates (ii) is seen by writing the coefficients of sector $p$ as
$c_j=\gamma_j\beta_j$ with $\gamma_{j+1}/\gamma_j=1+\varepsilon_{p,j}$: the deformed chain is
equivalent to the undeformed one with the measure $N_j^2$ multiplied by $\gamma_j^2$
and the bond weight $N_jN_{j+1}$ by $\gamma_{j+1}^2$. For $\varepsilon_{p,j}=\delta_p/j$ one has
$\gamma_j\propto r^{\delta_p}$, so the $\beta$ chain tends to
$\partial_r^2+\frac{6+2\delta_p}{r}\partial_r$; returning to $c=r^{\delta_p}\beta$, the
dimension stays seven and the effective angular eigenvalue shifts to
\begin{equation}\label{eq:Ceff}
C_2^{\rm eff}(p)=p(p+3)+\delta_p(\delta_p+5).
\end{equation}
For $\varepsilon_{p,j}=O(j^{-2})$, $\gamma_j$ converges and the limit is unchanged, but
convergence generally drops to first order ($U^{\rm pos}$ is an example).

Consistency with the angular Laplacian~\eqref{eq:KL}, which acts within each layer,
requires $\delta_p(\delta_p+5)=0$. The root $\delta_p=0$ is (ii); the root
$\delta_p=-5$ gives the same operator in the bulk but changes the Neumann condition
$xJ_\nu'(x)=\frac52J_\nu(x)$ at the outer end into the Robin condition
$xJ_\nu'(x)=-\frac52J_\nu(x)$. This residual freedom is removed by the
coefficient-copying prescription of Theorem~\ref{thm:copymain}.

\section{The linear program for the squashing}
\label{app:lp}

This appendix gives the procedure behind the statements on arbitrary configurations
in Section~\ref{sec:kk-setup}. Fix the level $k=j-1$. Under $Sp(2)$, $\End(V_k)$
decomposes into the components $V_{(p,s)}\cong(p-s,2s)$, $0\leq s\leq p\leq k$, each
occurring once (Section~\ref{sec:kk-setup}). The target $\hat C_2+h\Delta_I$ acts on
$V_{(p,s)}$ as $\tau_{(p,s)}(h)=p(p+3)+h\,s(s+1)$.

Let $\{X_a\}$ be any configuration of Hermitian matrices on $V_k$ and
$\Delta_X=\sum_a[X_a,[X_a,\cdot\,]]$ its Laplacian. Since the target is
$Sp(2)$-invariant, $\Delta_X$ may be replaced by its $Sp(2)$ average, which is the
Laplacian of the configuration together with all its $Sp(2)$ rotations. The average
acts on $V_{(p,s)}$ as a scalar $\lambda_{(p,s)}$. Write $X_a=\sum_{(q,t)}X_a^{(q,t)}$
with $X_a^{(q,t)}\in V_{(q,t)}$. Under the average, cross terms between different
components vanish, and by Schur's lemma each component contributes in proportion to
its norm, so that
\begin{equation}\label{eq:lp}
\lambda_{(p,s)}=\sum_{(q,t)}L_{(p,s),(q,t)}\,g_{(q,t)},\qquad
L_{(p,s),(q,t)}=\sum_{B}\bigl\|[B,O_{(p,s)}]\bigr\|^2_{\HS},\qquad
g_{(q,t)}=\frac{\sum_a\|X_a^{(q,t)}\|^2_{\HS}}{\dim V_{(q,t)}}\geq0 .
\end{equation}
Here $B$ runs over a HS-orthonormal basis of $V_{(q,t)}$ and $O_{(p,s)}$ is any unit
element of $V_{(p,s)}$. Conversely, every $g\geq0$ is realized by the configuration
$\{\sqrt{g_{(q,t)}}\,B\}$, with a Hermitian basis $B$ of each component. The components
with $p=0$ or $q=0$ are multiples of the identity and drop out, so $L$ is a square matrix
of size $k(k+3)/2$.

Exact realization of the squashing is therefore the feasibility problem $Lg=\tau(h)$,
$g\geq0$. We computed $L$ for $k=1,\dots,6$ by decomposing $\End(V_k)$ numerically
(the components are the eigenspaces of $\hat C_2+\eta\,\hat C_2^{Spin(5)}$ for a
generic $\eta$). In all cases $L$ is non-singular, with condition numbers $4$,
$26$, $89$, $235$, $526$ and $1051$. The unique solution is
\begin{equation}\label{eq:lpsol}
g_{(1,0)}=I_2(k)\,(1-h),\qquad g_{(1,1)}=I_2(k)\,(1+h),\qquad g_{(q,t)}=0\ \ (q\geq2),
\end{equation}
with $I_2(k)=\frac12\binom{k+4}{5}$ the norm of the generators (footnote in
section~\ref{sec:boson}). This is the rescaled configuration of
section~\ref{sec:kk-setup}, and it is non-negative exactly for $-1\leq h\leq1$. For
$k=2$, with rows and columns ordered as $(1,0)$, $(1,1)$, $(2,0)$, $(2,1)$, $(2,2)$,
\begin{equation}\label{eq:Lk2}
L=\frac13\begin{pmatrix}
2&2&14&28&14\\
1&3&7&21&28\\
5&5&5&25&20\\
4&6&10&18&22\\
2&8&8&22&20
\end{pmatrix},
\end{equation}
and~\eqref{eq:lpsol} with $I_2(2)=3$ can be checked by hand. The ratio of the rows
$(2,2)$ and $(2,0)$ lies between $0.4$ and $1.6$ in every column, which gives the bound
$\lambda_{(2,2)}\leq1.6\,\lambda_{(2,0)}$ of Figure~\ref{fig:squash}(b).

For the approximate realization we fix $h$ and minimize $\varepsilon$ subject to
$g\geq0$ and $|(Lg)_{(p,s)}-\tau_{(p,s)}(h)|\leq\varepsilon\,\tau_{(p,s)}(h)$ for all rows
with $p\leq2$. This linear program gives the values of Figure~\ref{fig:squash}(c).
\end{document}